\pgfplotsset{every tick label/.append style={font=\small}}
\algrenewcommand\algorithmicrequire{\textbf{Input:}}
\algrenewcommand\algorithmicensure{\textbf{Output:}}
\newcommand\abol[1]{\textcolor{orange}{(Abol) #1}}
\newcommand\techrep[1]{#1}
\newcommand\submit[1]{}
\newcommand{\blue}{black}
\newcommand{\stitle}[1]{\vspace{2mm}\noindent{\bf #1}.}
\newcommand\eat[1]{}
\newcommand{\dee}{\mathcal{D}}
\newcommand{\Gee}{\mathcal{G}}
\newcommand{\gee}{\mathbf{g}}
\newcommand{\el}{\mathcal{L}}
\newcommand{\g}{\mathbb{G}}
\DeclareMathOperator{\EX}{\mathbb{E}}
\newcommand{\at}[1]{{\tt \small #1}\xspace}
\newcommand{\bc}{{\sc Group-Coverage}\xspace}
\newcommand{\nbc}{{\sc Multiple-Coverage}\xspace}
\newcommand{\mc}{{\sc Intersectional-Coverage}\xspace}
\newcommand{\agg}{{\sc Aggregate}\xspace}
\newcommand{\css}{{\sc Classifier-Coverage}\xspace}
\newcolumntype{P}[1]{>{\centering\arraybackslash}p{#1}}
\def\BibTeX{{\rm B\kern-.05em{\sc i\kern-.025em b}\kern-.08em
    T\kern-.1667em\lower.7ex\hbox{E}\kern-.125emX}}
  \providecommand\BibTeX{{%
    \normalfont B\kern-0.5em{\scshape i\kern-0.25em b}\kern-0.8em\TeX}}}
\begin{document}

\title{Data Coverage for Detecting Representation Bias in Image Datasets: A Crowdsourcing Approach}
\thanks{This project was supported in part by NSF 2107290.}

\author{Melika Mousavi}
\email{mmousa7@uic.edu}
\affiliation{%
  \institution{University of Illinois Chicago}
  \country{}
}

\author{Nima Shahbazi}
\email{nshahb3@uic.edu}
\affiliation{%
  \institution{University of Illinois Chicago}
  \country{}
}

\author{Abolfazl Asudeh}
\email{asudeh@uic.edu}
\affiliation{%
\institution{University of Illinois Chicago}
\country{}
}

\begin{abstract}
  Existing machine learning models have proven to fail when it comes to their performance for minority groups, mainly due to biases in data. 
  In particular, datasets, especially social data, are often not representative of minorities.
  In this paper, we consider the problem of representation bias identification on image datasets without explicit attribute values. 
  Using the notion of data coverage\techrep{ for detecting a lack of representation}, we develop multiple crowdsourcing approaches.
  Our core approach, at a high level, is a divide and conquer algorithm that applies a search space pruning strategy to efficiently identify if a dataset misses proper coverage for a given group. We provide a different theoretical analysis of our algorithm, including a tight upper bound on its performance which guarantees its near-optimality.
  Using this algorithm as the core, we propose multiple heuristics to reduce the coverage detection cost across different cases with multiple intersectional/non-intersectional groups.
We demonstrate how the pre-trained predictors are not reliable and hence not sufficient for detecting representation bias in the data. Finally, we adjust our core algorithm to utilize existing models for predicting image group(s) to minimize the coverage identification cost.
We conduct extensive experiments, including live experiments on Amazon Mechanical Turk to validate our problem and evaluate our algorithms' performance.
\end{abstract}

\begin{CCSXML}
<ccs2012>
 <concept>
  <concept_id>10010520.10010553.10010562</concept_id>
  <concept_desc>Computer systems organization~Embedded systems</concept_desc>
  <concept_significance>500</concept_significance>
 </concept>
 <concept>
  <concept_id>10010520.10010575.10010755</concept_id>
  <concept_desc>Computer systems organization~Redundancy</concept_desc>
  <concept_significance>300</concept_significance>
 </concept>
 <concept>
  <concept_id>10010520.10010553.10010554</concept_id>
  <concept_desc>Computer systems organization~Robotics</concept_desc>
  <concept_significance>100</concept_significance>
 </concept>
 <concept>
  <concept_id>10003033.10003083.10003095</concept_id>
  <concept_desc>Networks~Network reliability</concept_desc>
  <concept_significance>100</concept_significance>
 </concept>
</ccs2012>
\end{CCSXML}

\ccsdesc[500]{Information systems}
\ccsdesc[500]{Information systems~Data management systems}

\keywords{Crowdsourcing, Fair ML, Responsible AI}


\maketitle

\vspace{-2mm}\noindent
\textcolor{\blue}
{\textbf{Artifact Availability:}\\
The source code, data, and/or other artifacts have been made available at \url{https://github.com/melimou/ImageDataCvgCrwd}.
}
\vspace{-2mm}\section{Introduction}\label{sec:intro}
Tracing back machine bias to its source, there have been major efforts to identify different types~\cite{mehrabi2021survey, olteanu2019social,friedman1996bias} and sources~\cite{torralba2011unbiased,crawford2013hidden,diakopoulos2015algorithmic} of bias in data.
{\em Representation bias}~\cite{shahbazi2023representation}, in particular, happens when a dataset fails to represent some parts of the target population \cite{suresh2021framework}. 
Lack of representation from certain minority groups in data has caused many instances of machine bias and algorithmic unfairness in data-driven algorithms.
For example,\techrep{ on multiple occasions, Google's image search results have been reflecting societal biases, with the most famous example being the ``CEO'' search query returning only pictures of male CEOs in the top results \cite{kay2015unequal}. Other notable examples include }
Facebook's ad algorithm excluding women from seeing specific jobs~\cite{imana2021auditing}, or commercial gender classification systems from Microsoft, IBM, and Face++ that performed up to 35\% worse on dark skin women compared to light skin men \cite{buolamwini2018gender}.
\techrep{
Another example is the infamous ``Google gorilla'' incident \cite{google-gorilla} where an early image recognition algorithm released by Google had not been trained on enough dark-skinned faces and failed to label black females appropriately.
Last but not least, 
}\submit{Similarly, }
the blink-detection feature of Nikon Cameras misclassified Asian eyes as being closed~\cite{closed-eyes} due to a lack of representation for this group.
\textcolor{\blue}{
We shall also demonstrate similar results in our experiments in \S~\ref{exp:validation:downstream}. 
}

Recognizing the potential harms of representation bias, data coverage~\cite{asudeh2019assessing,lin2020identifying,asudeh2021coverage,tae2021slice,accinelli2021impact,10.14778/3415478.3415486,accinelli2020coverage,jin2020mithracoverage} has been introduced to ensure proper representation of minority groups in datasets used for decision making and building advanced data science tools.
At a high level, a dataset has proper coverage for a given group if it contains at least a certain amount of samples belonging to that group.
\techrep{
With many angles to tackle, the problem of identifying and resolving insufficient coverage has been studied for datasets with discrete~\cite{asudeh2019assessing} and continuous~\cite{asudeh2021coverage} attributes populated in single or multiple \cite{lin2020identifying} relations. 
}
Despite the extensive work 
to detect lack of coverage in a given dataset, {\em existing work is limited to the structured data}, in the form of a table where every row is a tuple of numeric attributes. On the contrary, many of the well-known incidents of representation bias causing machine bias, including all aforementioned examples, are in non-tabular contexts, such as multi-media or textual data. 

Admitting the wide range of multimedia databases, with attributes of interest being in different forms and cardinalities, as our first attempt in this project, we consider {\em image data and a small number of low-cardinality categorical attributes}. Our choices are motivated by image data’s popularity in data science tasks and the reported unfairness issues in the image application domains.
Our assumption of the attributes of interest follows the fact that sensitive attributes such as {\tt\small race} and {\tt\small gender} are low-cardinality and non-ordinal, where each value such as {\tt\small black} or {\tt\small female} represents a specific demographic group.

It is common that image datasets usually \underline{lack explicit values} for attributes of interest (such as {\tt\small gender} or {\tt\small race})\techrep{, crucial for coverage identification}. An image dataset is often a collection of images from different domains with little to no information about their domain and which groups they belong to. As a result, even studying coverage over low-cardinality and categorical attributes of interests is challenging in these cases. 

There are multiple directions one can seek to overcome such challenges. Considering a small number of categorical attributes of interest (such as {\tt \small race} and {\tt \small gender}), one can use off-the-shelf automated techniques, such as classifiers, to first label tuples with their demographic information\footnote{In presence of accurate predictive models, {\em our algorithms utilize them} to identify coverage with minimum cost to verify the correctness of their results (see \S~\ref{sec:classifier}).}. Then, relying on the predicted groups, apply the coverage detection techniques to identify the lack of coverage in data.
However, as we observe in our experiments, this approach fails, mainly due to the following issues:

\begin{enumerate}[leftmargin=*]
    \item \textit{(Machine Bias)}: while the objective of identifying lack of coverage is to minimize machine bias, using (problematic) off-the-shelf models will transfer their biases into the labeled data, causing bias in the evaluation of the dataset. For example, consider a gender-detection classifier. Due to the inherent issues in how the classifier has been trained (and the data it used), it may perform differently across different minority groups. For instance, in our experiments (Table~\ref{table:classifier}), we observed that the precision of a gender classifier from a well-known face recognition framework such as DeepFace \cite{serengil2020lightface} for females can get as low as 8\% for a given image dataset. 
    \techrep{
    These observations along with many real-life examples of classifiers' failure to perform well for minority groups further support the idea that there is no guarantee that relying on the existing models for the purpose of coverage identification leads to precise and robust outcomes.
    }
    \item \textit{(Lack of distribution generalizability)}:  Existing tools are trained using data that may come from a different application domain, following a different distribution, and hence may not perform well on the dataset to be evaluated. Let us consider the example of a gender-detection classifier once again. Suppose the classifier has been trained using the standard portray images with a solid background. One cannot expect the classifier to perform well on randomly taken images~\cite{kulynych2022you}. Note that applying transfer learning techniques to retrain the model using the dataset to be evaluated is not helpful since we cannot expect a model to identify the representation biases of the dataset it is trained on.
\end{enumerate}
Considering the above issues with the existing data-driven tools and techniques, a promising approach to consider is crowdsourcing: to efficiently use human workers to identify a lack of coverage issues. Crowdsourcing is particularly promising for image data, for the tasks such as image labeling, which while being challenging for the machine, are considered "easy" for human-being to conduct with minimal error\footnote{\textcolor{\blue}{We do not make the assumption that human beings are completely reliable but rather suggest our framework for cases where human labels are more reliable.}}. \textcolor{\blue}{Using crowdsourcing for labeling the images with their attributes of interest can potentially add human bias into the process. Fortunately, accurate and reliable crowdsourcing that minimizes individual errors and biases has been studied well in the literature.} Aggregating the responses of multiple crowd workers \cite{ipeirotis2010quality,dawid1979maximum,10.1145/2806416.2806451}, and profiling the crowd \cite{whitehill2009whose,welinder2010multidimensional,wang2011managing} are some of the known techniques proposed for this purpose.
A baseline solution then can be designed as a two-step process: first ask the crowd to provide the attribute values for all images in the dataset. Then apply off-the-shelf coverage identification techniques \cite{asudeh2019assessing} to detect the uncovered groups.
Cost-effectiveness, however, is a major requirement in crowdsourcing frameworks such as Amazon Mechanical Turk since there usually is a cost associated with each crowd task. The proposed baseline solution is ineffective in such frameworks because, depending on the size of datasets, it may require a significant number of tasks, meaning a considerable cost to study coverage in a given dataset.
Hence, in this paper we study the problem of {\em identifying the lack of coverage in an image dataset with the minimum number of crowd tasks}.
\textcolor{\blue}{
Existing research on bias detection in image data sets is limited to \cite{hu2020crowdsourcing} that proposes a crowd-sourcing workflow to facilitate discovering attributes with potential sampling bias (e.g., finding out the airplanes in an image data set are facing right).
The high level idea in this paper is to show random samples of images from an input dataset to the crowd to identify comon similarities, and then ask the crowd judge to verify the discovered statements (see \S~\ref{sec:related} for more details).
In contrast, our objective in this paper is to detect representation bias in terms of the data coverage with respect to the given attributes of interest such as \at{gender}, instead of discovering attributes that may reflect potential sampling bias.
}

\vspace{-2mm}
\stitle{Summary of Contributions} We consider the problem of coverage identification in image datasets using crowdsourcing. 
\textcolor{\blue}{To the best of our knowledge, our paper is {\em the first to study data coverage in image datasets}.}
In summary, our contributions are the following:



\begin{itemize}[leftmargin=*]
    \item We propose a divide-and-conquer (d\&c) algorithm to identify the coverage of a demographic group across an image dataset. 
    \textcolor{\blue}{
    To enable the development of our algorithm, we employ set-based crowd tasks, which have been utilized in various crowdsourcing studies~\cite{li2017crowdsourced,li2016crowdsourced,marcus2012counting,gomes2011crowdclustering}.
    At a high level, our algorithm falls in the general class of group testing approaches, wherein the process of identifying certain objects is divided into tests on groups of items~\cite{du2000combinatorial,dorfman1943detection}.
    While, following the same logic of the group testing approaches, the design details of our algorithm are problem specific, making its performance close to the lower bound on the maximum number of tasks. We prove this by 
    } a tight upper bound on the maximum number of tasks the algorithm generates. 
    \item Using our d\&c algorithm as the core, we propose efficient algorithms for coverage identification over different scenarios with multiple non-intersectional and intersectional groups.
    We further introduce practical heuristics for coverage identification by carefully aggregating the minority groups into the so-called ``super-groups''.
    \item In presence of pre-trained classification models that predict the group(s) an image belongs to, we adjust our core algorithm to utilize their prediction and minimize the coverage identification cost. In cases where the models accurately predict the group labels, our algorithm only generates a small number of tasks to verify the correctness of the results.
    \item We evaluate our algorithms using extensive experiments on real and synthetic settings.
    We run {\em live experiments on Amazon Mechanical Turk with real workers} to validate our proposal. Besides, our performance evaluation experiments verify our theoretical findings, confirming the effectiveness of our algorithms. 
\end{itemize}


\vspace{-3mm}
\section{Preliminaries}

\subsection{Data Model}
We consider a dataset in form of a collection $\dee$ of $N$ objects, each being an image. 
For example, $\dee$ can be a set of $N=10,000$ human face images.
We use the notation $t_i$ to refer to the $i$-th object in $\dee$.
We consider the {\em no explicit attribute-value} model for the data. 
That is, $\dee$ only contains the objects, while the objects are not annotated.

We assume objects are associated with at least one attribute of interest considered for identifying representation bias.
Formally, we use $\mathbf{x}=\{x_1,\cdots,x_d\}$ to specify the attributes of interest.
Each attribute of interest is a categorical sensitive attribute such as {\small \tt race}, {\small \tt gender}, and {\small \tt age-group}. Each attribute has a cardinality of two or more, specifying different non-overlapping (demographic) groups.
For example, a binary attribute {\small \tt gender} with values {\small \tt \{male, female\}} partitions the individuals into two non-overlapping groups.

In particular, we consider three different scenarios with attribute and group models:
The most simple scenario is the {\em single binary} sensitive attribute case where objects are associated with only one binary sensitive attribute. Many of the problematic representation bias cases that have been reported fall under this category.
Examples of this type of attribute include {\small \tt skin-tone} aggregated into a binary feature of {\small \tt fair} and {\small \tt dark} skin-tone \cite{celis2020implicit}. Studies showed that the pulse oximeter devices have a questionable accuracy in measuring arterial oxyhemoglobin saturation in individuals with dark skin-tone \cite{feiner2007dark}, which proves it is imperative that skin-tone feature be taken into account in developing this type of device.

The immediate generalization is the {\em multiple non-intersectional} groups case where each object is associated with one sensitive attribute with cardinality larger than two.
A non-binary attribute {\small \tt race} with values {\small \tt \{White, Black, Hispanic, Asian, Others\}}, or a multi-valued attribute {\small \tt gender} with values {\small \tt \{male, female, non-binary\}} are examples of this case.

The next level of generalization is the intersection of multiple attributes where each object is associated with more than one sensitive attribute that can be either binary or non-binary. An example of this case can be the intersection of {\small \tt race} and {\small \tt gender}, where each individual can be associated with one value from each of these attributes such as {\small \tt Asian female} or {\small \tt Hispanic male}.

 \vspace{-2.8mm}
\subsection{Data Coverage}

We use the notion of {\em data coverage} \cite{asudeh2019assessing} to identify representation bias in a dataset $\dee$.
In particular, consider a dataset $\dee$ with $d$ attributes of interest $\mathbf{x}$, a count threshold $\tau$ (e.g. $\tau=50$),
and a subgroup $\gee$ (e.g. {\small \tt \{gender=male, race=white\}}) defined over $\mathbf{x}$.
The dataset satisfies coverage over $\gee$, if there are at least $\tau$ objects in $\dee$, matching the subgroup $\gee$ (e.g. there are more than 50 objects with {\small \tt gender=male} AND {\small \tt race=white}).

For datasets with more than one attribute of interest ($d>1$), \textit{patterns} are used to specify the subgroups.
A pattern $P$ is a string of $d$ values, where $P[i]$ is either a value from the domain of $x_i$, or it is ``unspecified'', specified with $X$. 
For example, consider a dataset with three binary attributes of interest $\mathbf{x}=\{x_1, x_2, x_3\}$. The pattern $P=X01$ specifies all the tuples for which $x_2=0$ and $x_3=1$ ($x_1$ can have any value).
Consider the universe of all patterns over a set of attributes $\mathbf{x}$.
We say a pattern $P$ is a parent of another pattern $P'$ if (a) there exists exactly one attribute $x_i$ on which $P$ and $P'$ are different, while (b) $P[i]=X$ (unspecified).
Note that $P$ in this case is a more general subgroup than $P'$, since all the objects matching $P'$ also match $P$, but the vice-versa is not valid.
A pattern $P$ is a {\em maximal uncovered pattern} (MUP) if there are less than $\tau$ objects in $\dee$ matching it, while all of its parents are covered.
The lack of coverage in a dataset is identified by discovering all of its MUPs.

\vspace{-2.8mm}
\subsection{Crowdsourcing Model}\label{sec:pre-crowd}
A major challenge in studying coverage over multimedia data is that the objects are not annotated and we do not know the values on $\mathbf{x}$ beforehand.
We use crowdsourcing to overcome this challenge. 
In crowdsourcing platforms such as Amazon Mechanical Turk (AMT) or Crowdflower, a {\em microtask} or a {\em Human Intelligence Task (HIT)} is a simple task that usually requires no domain-specific knowledge or expertise, has a clear description and a price which workers can accept and complete and get paid given their result is approved by the requester of the task.

\stitle{Quality control and aggregation model}
Quality of answers is a well-studied crowdsourcing challenge.
One popular approach is to employ a redundancy-based strategy in which a single HIT is assigned to multiple workers and the correct answer (the {\em truth}) is inferred by aggregating the multiple answers. There are several studies on truth inference methods. 
The proposed techniques in this paper are {\em agnostic} to the
choice of the crowdsourcing framework, quality control, and HIT aggregation model. 
In our experiments, we adopt the popular majority vote strategy~\cite{10.14778/3055540.3055547} to get the truth and Qualification and Rating \cite{daniel2018quality} as individual assessments to ensure a higher quality of answers from the crowd.

\begin{figure}[!tb]
    \centering
    \includegraphics[width=.49\textwidth]{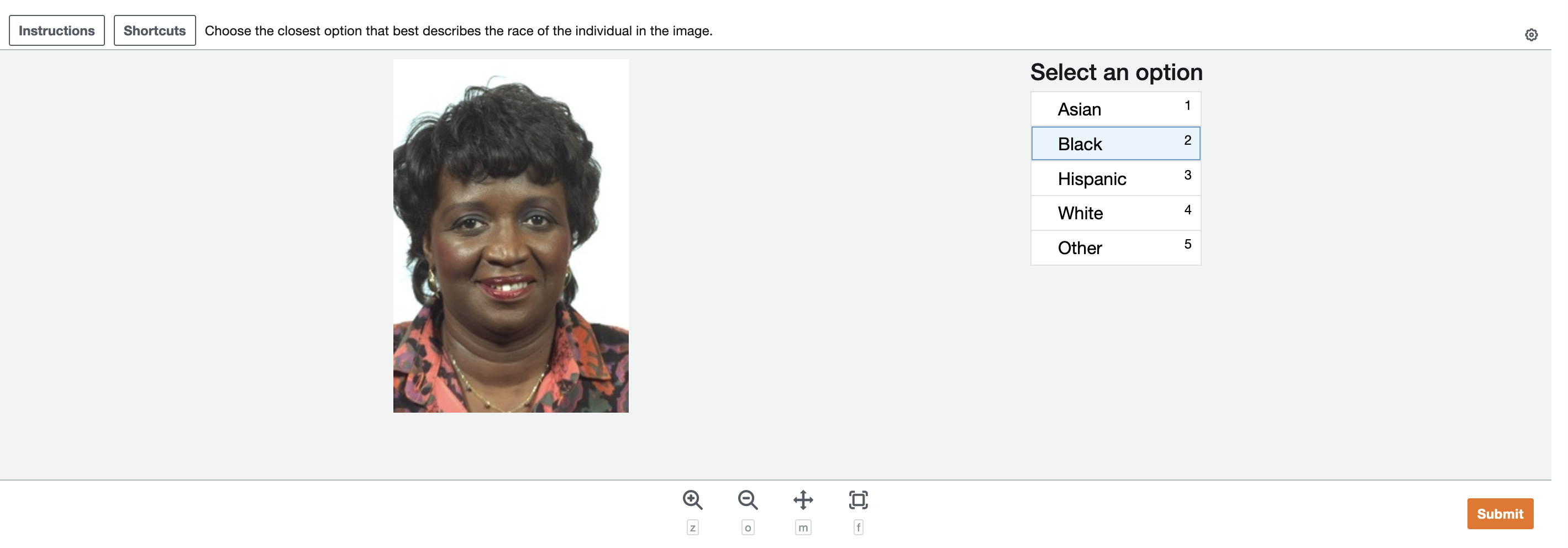}
    \vspace{-8mm}\caption{An example of point query to label a {\small \tt race} attribute}
    \label{fig:pquery}
    \vspace{-4mm}
\end{figure}

\stitle{Query/ HIT model}
We consider two types of queries:
\begin{enumerate}[leftmargin=*]
    \item {\em Point queries:} A point query is a request to provide a piece of information (attribute value) about a single object. The query itself can be either a yes-no question or providing one or more labels associated with the attributes of interest. In Figure \ref{fig:pquery} an example of a point query is demonstrated, where the worker is asked to provide the race of an individual.
    \item {\em Set queries:} 
    While a point query asks the crowd to provide specific attribute values for specific objects, the set queries are for the purpose of {\em verification}.
    A set query is a simple yes-no question about a given set of multimedia objects (image, video, etc.). 
    That is to ask if the set contains at least one object belonging to a specific (sub-)group. For example, Figure~\ref{fig:squery} shows a set query asking the crowd to verify if the set contains any females.
    In practice, one may need to consider an upper bound on the number of objects in a set query 
    for the query to be reasonable and the answers more accurate. 
\end{enumerate}


\stitle{Pricing model}
Pricing and incentive models for crowdsourcing frameworks have been extensively studied. Examples of such methods are {\em fixed price models}, {\em bidding models} \cite{10.1145/2488388.2488489}, and {\em posted price models} \cite{10.1145/2488388.2488490}. 
In this paper, we adopt the fixed pricing model, that is all tasks have an equal cost.
Therefore, our objective is to minimize the number of tasks required to finish the task of detecting coverage which in turn is in line with minimizing the total cost.
    \begin{figure}[!tb]
    \centering
    \techrep{\includegraphics[width=0.4\textwidth]{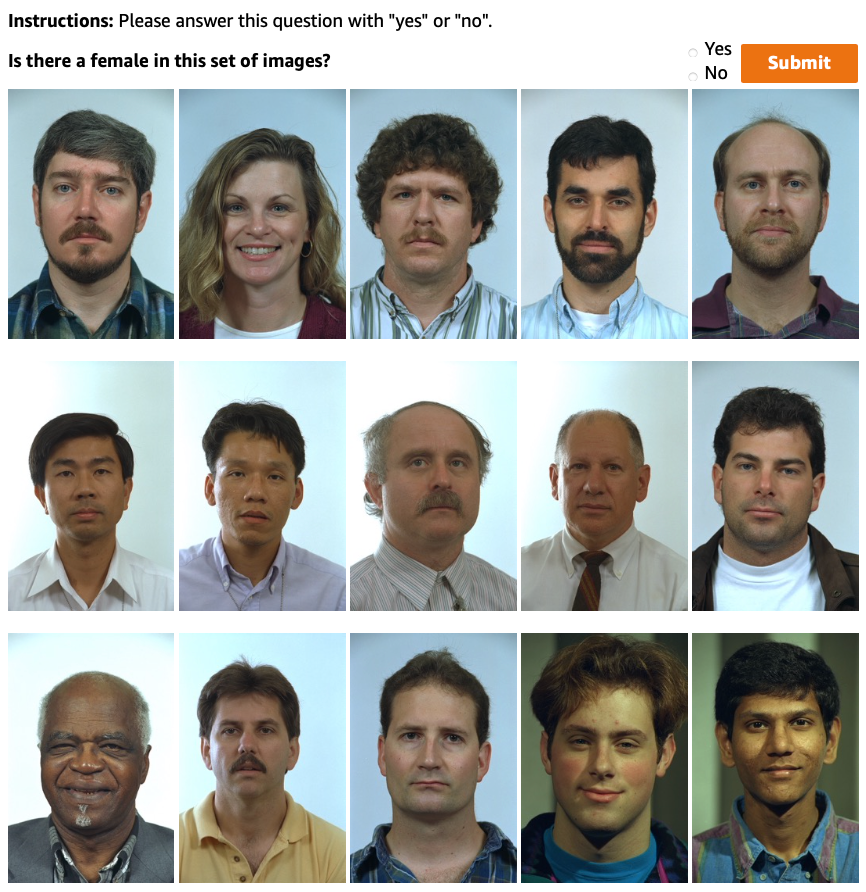}}
    \submit{\includegraphics[width=0.4\textwidth]{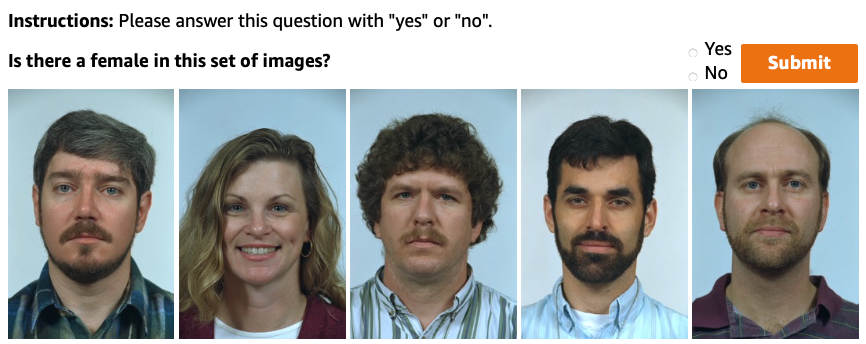}}
    \vspace{-3mm}\caption{An example of set query about {\small \tt gender} attribute}
    \label{fig:squery}
    \vspace{-6mm}
    \end{figure}

\vspace{-2.5mm}
\subsection{Problem Definition} \label{s:problem_def}
Having explained the data model, data coverage, and the crowdsourcing model, we now define our problem as follows: 

\begin{center}
\framebox[\columnwidth]{\parbox{0.9\columnwidth}    { 
        \textbf{\small \textsc{Problem Formulation:}
        }
        \textit{\small 
            Given an image dataset $\dee$,
            the attributes of interest $\mathbf{x}$,
            and the coverage count threshold $\tau$,
            identify the lack of coverage on $\dee$ while minimizing the number of tasks required.
        }
    }}
\end{center}

In particular, we study this problem under three settings: (1) single (minority) group, (2) multiple non-intersectional groups, and (3) intersectional groups, proposing efficient solutions tailored for each setting.
At a high level, our main idea is to divide the dataset into subsets of a specified size, ask the crowd a question about the attributes of interest, and based on the crowd's answer to the question, decide to divide that particular subset into two halves or prune it. Modeling each algorithm's flow into a binary tree, 
we analyze the efficiency of our proposal particularly when the dataset size is very large.
We also study the problem in presence of predictive tools for labeling the data, utilizing them to effectively form the tasks, and minimizing the interaction with the crowd.

\vspace{-2.5mm}
\section{Efficient Coverage Identification}\label{s:binary}

We start our technical sections by designing a general algorithm that can be used for detecting coverage over different settings of sensitive attributes.
In particular, {\em given a demographic (sub)group $\gee$, our goal in this section is to identify if $\gee$ is uncovered}.

Before proposing the algorithm, however, let us consider the single binary sensitive attribute case to observe a challenge we need to address in our algorithm.
In such a setting, 
the binary attribute partitions the data into two groups: the majority group (to which most of the objects belong) and the minority group. Let $\gee_1$ and $\gee_2$ be the majority (e.g. {\small \tt gender=male}) and
minority (e.g. {\small \tt gender=female}) group, respectively.

\stitle{Challenge} We observe that {\em verifying that $\gee_1$ is covered} can be done efficiently.
To see why, suppose the coverage threshold is $\tau = 100$, i.e., a group is covered if there are at least $100$ instances of it in the dataset.
Assume the (majority) group $\gee_1$ contains $n_1 \gg 100$ objects in the dataset. 
In order to verify that $\gee_1$ is not uncovered, it is enough for the crowd to discover 100 of those objects, not the entire $n_1$. 
Following this, $\Theta(\tau)$ provides a lower bound on the number of crowd tasks required for verifying the coverage for a given group. 
Still, this lower bound only holds for the groups that are covered, i.e. there at least $\tau$ of those in the dataset.
Surprisingly,
unlike the majority group,
verifying that a minority group is indeed uncovered is cumbersome.
This is because even though discovering $\tau$ objects from a group is enough for verifying that it is covered, one cannot verify a group is uncovered until there is a chance that the dataset might still have enough objects from that group. Thus, assuming a non-zero probability for each unlabeled object to belong to each group, {\em one might need to ask the crowd to label the entire dataset before one can confirm that a specific group is uncovered}. 

\vspace{-2.5mm}
\subsection{Coverage Identification for a Given Group}
Verifying that a minority group is uncovered is challenging. Our idea is to
design {\em a divide and conquer algorithm} that, instead of {point queries}, uses {\em set queries} to iteratively eliminate subsets of data that {\em does not include any object from the given group}.
At a high level, the algorithm asks a set query from the crowd, inquiring whether the selected set contains at least one object from the given group $\gee$ (Figure~\ref{fig:squery}).
The user may provide two responses (yes/no). 
Interestingly, {\em in either case}, the user response provides useful information that helps in the efficient study of the coverage:
\begin{itemize}[leftmargin=*]
    \item {\em The answer is ``No'':}
    If the answer to a set query is no, it means the set does not include any object from the given group $\gee$. As a result, the algorithm can safely prune the set, asking no further questions about it.
    In particular, for a group that is not covered, one can expect to see no answers on large set queries helping to quickly prune a large portion of the dataset.
    \item {\em The answer is ``Yes'':} A yes answer to a set query means that the set contains one or more objects from the group $\gee$. Therefore, the algorithm cannot prune the subset since it can have any number (larger than zero) of the objects in $\gee$.
    At the first glance, the queries with yes answers do not provide useful information as the algorithm cannot prune the subset (it needs to divide it to smaller subsets).
    However, a key observation is that, \textcolor{\blue}{since the sets are disjoint,} {\em the algorithm will require observing only a limited number of yes answers} before it stops.
    That is because the number of such queries provides a {\em lower-bound} on the number of objects from $\gee$ in the dataset. Hence, as soon as the lower-bound reaches to $\tau$, the algorithm can stop, knowing that $\gee$ is covered.
\end{itemize}

Based on the observations on the answers of the set query tasks, we design our divide and conquer algorithm as follows. 

We use a {\em binary tree} data structure to efficiently implement the algorithm.
Each node in the binary tree has the following structure:
\begin{center}
    \small
    \begin{verbatim}
    struct node:
        b_index     // the beginning index of the range
        e_index     // the end index of the range
        parent=null,      // link to the parent node
        left=null,        // link to the left child
        right=null,       // link to the right child
        checked=false, // true if at least one of its
                    // child nodes has returned a yes answer
    \end{verbatim}
    \vspace{-3mm}
\end{center}
Each node in the tree is associated with a set query containing the objects $\{t_{b\_index},\cdots ,t_{e\_index}\}\in\dee$.
In addition, every node in the tree has pointers to its parent and children.
Finally, every node contains a boolean variable \textcolor{\blue}{\tt\small ``checked''} (with the default value {\tt\small false}) that is used for maintaining a lower-bound on the number of objects discovered from the target group $\gee$.

\begin{algorithm}[!tb]
    \caption{\bc}\label{alg1:coverageBinary}
    \begin{algorithmic}[1] \small
    \Require{Dataset $\dee$, dataset size $N$, subset size upper bound $n$, coverage threshold $\tau$, target group $\gee$}
    \Ensure{Coverage of group $\gee$, \textcolor{\blue}{the count lower-bound $cnt$}}
        \State $cnt \gets 0$;
        Let $Q$ = an empty queue
        \For{$i\gets0$ to $N$ with step size $n$}:   {\tt\small // init roots of subtrees}
            \State root $\gets$ node$(i,i + n)$;
            $Q.add(root)$
        \EndFor
        \While{$Q$ is not empty}
            \State $T\gets Q.del\_top()$
            \State $(i,j)\gets (T.b\_index,T.e\_index)$
            \State ans $\gets ${\sc AskQuestion}$(\{t_i,\cdots,t_j\},\gee)$ {\tt\small // set query}
            \If{$T.$parent is null}
                \State {\bf if} ans=yes {\bf then} $cnt \gets cnt+1$
                {\bf else continue}
            \Else
                \If{ans=no} \textcolor{\blue}{\tt\small // prune} \label{line11}
                    \State {\bf if} {\it T = T}.parent.left {\bf then} $T \gets Q.del(T.$parent.right$)$ \textcolor{\blue}{\tt\small // observe from Line~\ref{line:last} that left nodes are added first to $Q$}\label{line1}
                    \State {\bf else continue}
                \EndIf
                \State {\bf if} $T.$parent.checked {\bf then} $cnt \gets cnt + 1$ \label{line2}
                    \State {\bf else} $T.$parent.checked$\gets$ {\bf true}
            \EndIf
            \State {\bf if } $cnt=\tau$ {\bf then return true} {\tt\small // coverage threshold satisfied}
            \If{$j>i$} \textcolor{\blue}{\tt\small //if setsize>1}
                \State $T.$left$\gets$ node$(i,\lfloor\frac{i+j}{2}\rfloor)$;
                $T.$right$\gets$ node$(\lfloor\frac{i+j}{2}\rfloor+1, j)$
                \State $T.$left.parent$\gets T.$right.parent$\gets T$
                \State $Q.add(T.$left$)$; $Q.add(T.$right$)$ \label{line:last}
            \EndIf
    \EndWhile
    \State {\bf return false}, $cnt$ {\tt\small //uncovered}
    \end{algorithmic}
\end{algorithm}

Using the tree data structure,
Algorithm~\ref{alg1:coverageBinary} shows the pseudo-code of our proposed algorithm for the single binary sensitive attribute case.
The algorithm uses the variable {\tt \small cnt} to maintain the lower-bound number of objects discovered from the target group $\gee$.
Considering a maximum size of $n$ for the set queries, the algorithm starts by partitioning the data into $\lfloor N/n\rfloor$ subsets, allocating each to a binary tree.
Adding the roots of the trees to the queues, the algorithm then iteratively removes a tree node from the queue until a lower-bound count of $\tau$ is achieved for the minority group $\gee$ or it verifies that $\gee$ is uncovered.

For every node $T$ removed from the queue, the algorithm asks the crowd to verify if its corresponding set contains at least one instance belonging to $\gee$.
Depending on the answer from the crowd, multiple situations can happen. If the
node stands for the root of a binary tree and the response is yes, the algorithm has found at least one more object from $\gee$; but if the response is no (\textcolor{\blue}{Line~\ref{line11}}), it can safely prune the entire set from the search space and continue with other sets.
On the other hand, if $T$ is not a root node, if the answer is no, it means the answer for the other child of its \textcolor{\blue}{parent} should be yes.
That is because the parent node contained at least one object from $\gee$. Since $T$ does not contain any such object, the other child of its parent should contain at least one. As a result, in \textcolor{\blue}{Line~\ref{line1}}, the algorithm replaces $T$ with the other child of its \textcolor{\blue}{parent}, safely knowing the answer to a query to the new $T$ is yes.

When the answer to a non-root node query is yes, the algorithm may or may not be able to increase the lower-bound $cnt$. Note that the algorithm has already associated at least one object from $\gee$ to the parent of each non-root node (the nodes with no answers have been pruned). As a result, the lower bound on $\gee$ gets increased if the answer to both children of a parent node is yes. We use the variable {\tt\small checked} for this purpose. {\tt\small checked} is true, if the answer to one of the children of a node is yes. Using this, when receiving a yes response, the algorithm increases the lower bound (\textcolor{\blue}{Line~\ref{line2}}) only if the {\tt\small checked} variable of the parent of $T$ is true.
At any moment that the lower-bound reached the threshold $\tau$, the algorithm stops marking $\gee$ as covered.
Finally, 
the algorithm breaks the yes nodes with set sizes larger than one in two halves, adding them to the queue.
If after checking all nodes in the queue the threshold $\tau$ is not reached, $\gee$ is uncovered.

\color{\blue}
\stitle{Running Example} To better demonstrate \bc, let us consider a toy example with 16 images, where each image belongs to either group $\Box$ or group $\bigtriangleup$. Suppose we would like to check if $\bigtriangleup$ is covered, while $\tau=3$.
Figure~\ref{fig:runningexample} shows the The binary tree representation of the \bc algorithms' flow, while the root of the tree shows the entire image set.
The answer to first query on the root is \at{yes} so the lower-bound value $cnt$ gets updated to 1 and the images gets divided in two halves. The answer to both children of the root are also \at{yes}, so $cnt$ gets updated to 2, while each set gets divided by half. The next level of the tree contains four set queries, each containing four images. Moving from left to right, the answer to the left-most query is \at{no}, therefore, (i) this set gets pruned and (ii) the algorithm {\em without issuing a new task} knows the answer to the second-from-left query is \at{yes} (otherwise its parent query could not be \at{yes}). The same situation happens for the two right nodes.
Next, in the fourth level, the algorithm issues the first two queries from left and since the answer to both is \at{yes}, $cnt$ gets updated to 3 and the algorithm stops since it reaches the coverage threshold $\tau$. Note that in this example, the algorithm issues seven queries to the crowd before it stops.

\color{black}

\vspace{-3mm}
\subsection{Algorithm Analysis} \label{single_analysis}

\begin{lemma}\label{lemma1}
(Correctness) The \bc algorithm successfully identifies if a group $\gee$ is covered or not, i.e., if there are at least $\tau$ instances of $\gee$ in the dataset $\dee$.\submit{\footnote{Due to the space limitations, the proofs are provided in the technical report~\cite{techrep}.}}
\end{lemma}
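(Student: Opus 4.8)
The plan is to argue correctness by establishing two invariants that the algorithm maintains throughout its execution: (i) the variable \texttt{cnt} is always a valid \emph{lower bound} on the number of objects from $\gee$ present in $\dee$, and (ii) whenever the algorithm prunes a set (a ``no'' answer to a set query), that set genuinely contains no object from $\gee$, so no instance of $\gee$ is ever lost from the search space. From these two invariants, correctness follows in both directions: if the algorithm returns \textbf{true}, then (i) guarantees there are at least $\tau$ instances of $\gee$; and if it returns \textbf{false}, I need to argue that every object of $\gee$ has been ``accounted for'' and the total is below $\tau$.

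First I would set up the binary-tree/partition structure formally: the $\lfloor N/n\rfloor$ (plus possibly a remainder) subtrees partition $\dee$ into disjoint ranges, and within each subtree the two children of a node partition its range into disjoint halves. Hence at every stage of the \texttt{while} loop, the union of the ranges of (a) the nodes still in $Q$, (b) the nodes already processed with a ``yes'' answer whose children are in $Q$, together with the pruned ``no'' ranges, covers all of $\dee$, and these ranges are pairwise disjoint. Invariant (ii) is then immediate from the semantics of the set query assumed in \S\ref{sec:pre-crowd}: a ``no'' answer means the set contains no object of $\gee$, so pruning it (Line~\ref{line11}, and the \texttt{continue} on a root with a ``no'' answer) discards nothing relevant. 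I would also check the subtle step in Line~\ref{line1}: when a non-root node $T$ gets a ``no'' answer and is the left child, its sibling must contain an object of $\gee$ because the parent's ``yes'' answer certifies the parent's range contains one and it cannot lie in $T$; replacing $T$ by the sibling and treating its answer as ``yes'' is therefore sound, and symmetrically the left-before-right insertion order (Line~\ref{line:last}) guarantees the sibling is still in $Q$ to be deleted.

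Next I would verify invariant (i), i.e.\ that \texttt{cnt} never overcounts. The key is that \texttt{cnt} is incremented only in three places, and in each case it is charged against a \emph{distinct} object of $\gee$: (a) a root node with a ``yes'' answer contributes one, charged to any $\gee$-object in that root's range; (b) for a non-root node whose parent's \texttt{checked} flag is already \textbf{true}, meaning the sibling previously received (or was deduced to have) a ``yes'' — so both halves of the parent's range contain a $\gee$-object, and since the halves are disjoint these are two distinct objects, one of which is ``new'' relative to the single object already credited to the parent. I would make this precise by an injective-charging argument: maintain a partial injection from the increments of \texttt{cnt} to objects of $\gee$, using disjointness of sibling ranges to show the injection extends consistently each time \texttt{cnt} is bumped. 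This shows \texttt{cnt} $\le |\{t\in\dee : t \text{ matches } \gee\}|$ always, so returning \textbf{true} at \texttt{cnt}$=\tau$ is correct.

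Finally, for the \textbf{false} branch: when $Q$ empties, every range has either been pruned (contains no $\gee$-object, by (ii)) or has been fully subdivided down to singletons that received ``yes'' answers. I would argue that the charging injection above is in fact a \emph{bijection} onto the $\gee$-objects at termination — every leaf that is a singleton ``yes'' node corresponds to exactly one $\gee$-object, every internal ``yes'' node's object is counted once via its first ``yes'' child and once more if the second child is also ``yes'', and the disjointness of all surviving singleton ranges plus the exhaustiveness of the partition means no $\gee$-object escapes being a singleton ``yes'' leaf. Hence at termination \texttt{cnt} equals the exact number of $\gee$-objects, which is $<\tau$, so \textbf{false} is correct. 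The main obstacle I anticipate is precisely this last counting argument: making the injection/bijection between \texttt{cnt}-increments and $\gee$-objects airtight, carefully handling internal nodes where only one child is ``yes'' (the \texttt{checked}-flag bookkeeping) versus both, and confirming via an induction on tree depth that when the loop terminates the subdivision has in fact reached singletons along every un-pruned path. The tree-structural bits (disjointness, partition, sibling deduction in Line~\ref{line1}) are routine; the accounting that ties \texttt{cnt} exactly to the $\gee$-count is where the care is needed.
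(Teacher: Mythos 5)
Your proposal is correct and follows essentially the same route as the paper's proof: it establishes that \texttt{cnt} is a lower bound on $|\gee|$ (so returning \textbf{true} at \texttt{cnt}$=\tau$ is sound) and that when the queue empties every surviving \texttt{yes} set has been subdivided to a singleton, so \texttt{cnt} equals $|\gee|$ exactly and \textbf{false} is correct. The paper states this much more tersely; your injective-charging and sibling-deduction elaborations fill in details the paper leaves implicit but do not change the argument.
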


\techrep{
\begin{proof}
\vspace{-3mm}
Each set query with a {\tt\small yes} answer contains at least one object belonging to the group $\gee$.
Using this, the algorithm maintains a lower bound $cnt$ on the $|\gee|$ in $\dee$. That is, $cnt\leq |\gee|$.
The algorithm returns {\tt\small true} when $cnt=\tau$.
When $cnt=\tau$, $\gee$ is covered, because $\tau = cnt \leq |\gee|$.
The algorithm returns false when the queue is empty and $cnt<\tau$.
For sets with {\tt\small yes} answers, the algorithm divides the set in two halves, unless the set size is $1$. As a result, when the queue, all the set questions with {\tt\small yes} answers have had size $1$, otherwise the queue would have not empty.
Therefore, $|\gee| = cnt < \tau$, meaning that $\gee$ is uncovered.
\end{proof}
}

\techrep{
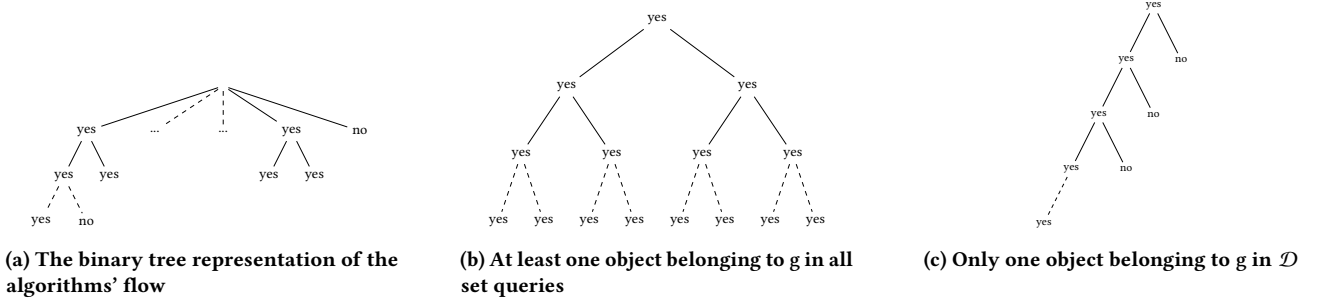
\begin{figure*}[!ht]
    \begin{subfigure}[t]{0.3\textwidth}
    \centering
        \resizebox{.9\linewidth}{!}{
\begin{tikzpicture}[level distance=1cm,
  level 1/.style={sibling distance=1.5cm},
  level 2/.style={sibling distance=1cm},
  level 3/.style={sibling distance=1cm}]
  \node {   }
    child {node {yes}
        child {node {yes}
            child[dashed] {node {yes}}
            child[dashed] {node {no}}
        }
        child {node {yes}
        }
        }
    child[dashed] {node {...}}
    child[dashed] {node {...}}
    child {node {yes}
        child {node {yes}
        }
        child {node {yes}
        }
        }
    child {node {no}
    };
\end{tikzpicture}
}
            \caption{The binary tree representation of the algorithms' flow}
            \label{fig:randomTree}
    \end{subfigure}
    \hfill
    \begin{subfigure}[t]{0.3\textwidth}
    \centering
        \resizebox{.9\linewidth}{!}{
\begin{tikzpicture}[
            level 1/.style={sibling distance=4cm},
            level 2/.style={sibling distance=2cm},
            level 3/.style={sibling distance=1cm}]
            \node {yes}
                child {node {yes}
                    child {node {yes}
                        child[dashed] {node {yes}}
                        child[dashed] {node {yes}}
                    }
                    child {node {yes}
                        child[dashed] {node {yes}}
                        child[dashed] {node {yes}}
                    }
                    }
                child {node {yes}
                    child {node {yes}
                        child[dashed] {node {yes}}
                        child[dashed] {node {yes}}
                    }
                    child {node {yes}
                        child[dashed] {node {yes}}
                        child[dashed] {node {yes}}
                    }
                };
        \end{tikzpicture}
}
        
            \caption{At least one object belonging to $\gee$ in all set queries}
            \label{fig:yes_case}
    \end{subfigure}
        \hfill
    \begin{subfigure}[t]{0.3\textwidth}
    \centering
        \resizebox{!}{.6\linewidth}{
\begin{tikzpicture}[level distance=1.5cm,
            level 1/.style={sibling distance=1.5cm},
            level 2/.style={sibling distance=1.5cm}]
            \node {yes}
                child {node {yes}
                    child {node {yes}
                        child {node {yes}
                                child[dashed] {node {yes}}
                                child[missing]
                            }
                        child {node {no}}
                    }
                    child {node {no}}
                }
                child {node {no}};
        \end{tikzpicture}
        }

        
            \caption{Only one object belonging to $\gee$ in $\dee$}
            \label{fig:no_case}
    \end{subfigure}
    \vspace{-4mm}\caption{Various outcomes of the binary tree of the generated tasks}
    \vspace{-5mm}
\end{figure*}
\begin{figure}[!bt]
        \centering
 

\resizebox{\linewidth}{!}{
\begin{tikzpicture}[
            level 1/.style={sibling distance=4cm},
            level 2/.style={sibling distance=2cm},
            level 3/.style={sibling distance=1cm}]
            \node { $\Box \Box \Box \Box \bigtriangleup \Box \Box \bigtriangleup \Box \Box \Box \Box \bigtriangleup \bigtriangleup \Box \bigtriangleup$ }
                child {node[align=center] {$\Box \Box \Box \Box \bigtriangleup \Box \Box \bigtriangleup$\\yes}
                    child {node[align=center] {$\Box \Box \Box \Box$\\no}
                    }
                    child {node[align=center] {$\bigtriangleup \Box \Box \bigtriangleup$\\yes}
                        child {node[align=center] {$\bigtriangleup \Box$\\yes}}
                        child {node[align=center] {$\Box \bigtriangleup$\\yes}}
                    }
                    }
                child {node[align=center] {$\Box \Box \Box \Box \bigtriangleup \bigtriangleup \Box \bigtriangleup$\\yes}
                    child {node[align=center] {$\Box \Box \Box \Box$\\no}
                    }
                    child {node[align=center] {$\bigtriangleup \bigtriangleup \Box \bigtriangleup$\\yes}
                    }
                };
        \end{tikzpicture}
}
        
        \vspace{-2mm}
        \caption{\color{\blue}\bc: Running Example}
\end{figure}
}

\submit{
\begin{figure*}[!ht]
   \begin{minipage}[t]{0.78\linewidth}
        \centering
    \begin{subfigure}[t]{0.325\textwidth}
    \centering
        \resizebox{.9\linewidth}{!}{
\begin{tikzpicture}[level distance=1cm,
  level 1/.style={sibling distance=1.5cm},
  level 2/.style={sibling distance=1cm},
  level 3/.style={sibling distance=1cm}]
  \node {   }
    child {node {yes}
        child {node {yes}
            child[dashed] {node {yes}}
            child[dashed] {node {no}}
        }
        child {node {yes}
        }
        }
    child[dashed] {node {...}}
    child[dashed] {node {...}}
    child {node {yes}
        child {node {yes}
        }
        child {node {yes}
        }
        }
    child {node {no}
    };
\end{tikzpicture}
}
            \caption{The binary tree representation of the algorithms' flow}
            \label{fig:randomTree}
    \end{subfigure}
    \hfill
    \begin{subfigure}[t]{0.325\textwidth}
    \centering
        \resizebox{.9\linewidth}{!}{
\begin{tikzpicture}[
            level 1/.style={sibling distance=4cm},
            level 2/.style={sibling distance=2cm},
            level 3/.style={sibling distance=1cm}]
            \node {yes}
                child {node {yes}
                    child {node {yes}
                        child[dashed] {node {yes}}
                        child[dashed] {node {yes}}
                    }
                    child {node {yes}
                        child[dashed] {node {yes}}
                        child[dashed] {node {yes}}
                    }
                    }
                child {node {yes}
                    child {node {yes}
                        child[dashed] {node {yes}}
                        child[dashed] {node {yes}}
                    }
                    child {node {yes}
                        child[dashed] {node {yes}}
                        child[dashed] {node {yes}}
                    }
                };
        \end{tikzpicture}
}
        
            \caption{At least one object belonging to $\gee$ in all set queries}
            \label{fig:yes_case}
    \end{subfigure}
        \hfill
    \begin{subfigure}[t]{0.325\textwidth}
    \centering
        \resizebox{!}{.6\linewidth}{
\begin{tikzpicture}[level distance=1.5cm,
            level 1/.style={sibling distance=1.5cm},
            level 2/.style={sibling distance=1.5cm}]
            \node {yes}
                child {node {yes}
                    child {node {yes}
                        child {node {yes}
                                child[dashed] {node {yes}}
                                child[missing]
                            }
                        child {node {no}}
                    }
                    child {node {no}}
                }
                child {node {no}};
        \end{tikzpicture}
        }

        
            \caption{Only one object belonging to $\gee$}
            \label{fig:no_case}
    \end{subfigure}
    \vspace{-4mm}\caption{Various outcomes of the binary tree of the generated tasks}
    \end{minipage}
    \hfill
    \begin{minipage}[t]{0.21\linewidth}
        \centering
 

\resizebox{\linewidth}{!}{
\begin{tikzpicture}[
            level 1/.style={sibling distance=4cm},
            level 2/.style={sibling distance=2cm},
            level 3/.style={sibling distance=1cm}]
            \node { $\Box \Box \Box \Box \bigtriangleup \Box \Box \bigtriangleup \Box \Box \Box \Box \bigtriangleup \bigtriangleup \Box \bigtriangleup$ }
                child {node[align=center] {$\Box \Box \Box \Box \bigtriangleup \Box \Box \bigtriangleup$\\yes}
                    child {node[align=center] {$\Box \Box \Box \Box$\\no}
                    }
                    child {node[align=center] {$\bigtriangleup \Box \Box \bigtriangleup$\\yes}
                        child {node[align=center] {$\bigtriangleup \Box$\\yes}}
                        child {node[align=center] {$\Box \bigtriangleup$\\yes}}
                    }
                    }
                child {node[align=center] {$\Box \Box \Box \Box \bigtriangleup \bigtriangleup \Box \bigtriangleup$\\yes}
                    child {node[align=center] {$\Box \Box \Box \Box$\\no}
                    }
                    child {node[align=center] {$\bigtriangleup \bigtriangleup \Box \bigtriangleup$\\yes}
                    }
                };
        \end{tikzpicture}
}
        
        \vspace{-2mm}
        \caption{\color{\blue}\bc: Running Example}\label{fig:runningexample}
    \end{minipage}
    \vspace{-5mm}
\end{figure*}
}

After correctness, let us now study the number of tasks \bc generate.
Before studying the performance in general cases, let us consider two extreme cases, while assuming $N=n$, where (Case I) the answer to all questions is yes, and (Case II) there exists only one object in $\gee$.

\noindent{\bf Case I:} 
Consider the cases where the answer to all set questions is yes, meaning that all set queries contain at least one object belonging to $\gee$. In this case, the set queries do not help prune the search space. As a result, the execution tree is a complete binary tree shown in Figure \ref{fig:yes_case}. Note that the number of leaf nodes in this tree shows the value of $cnt$, knowing that the answers to those set queries are {\tt\small yes}.
On the other hand, the \bc algorithm stops when $cnt = \tau$.
Therefore, the number of leaf nodes in the tree is at most $\tau$. In a complete binary tree with $\tau$ leaves, there are $\tau-1$ non-leaf nodes. As a result, the total number of tasks generated by the \bc algorithm, in this case, is $2\tau-1 = \Theta(\tau)$.

\noindent{\bf Case II:} 
Suppose there exists only one object from $\gee$ in $\dee$.
In this case (Figure \ref{fig:no_case}), every level of the execution tree contains exactly one node while one of its children is {\tt\small no}, while the other one is {\tt\small yes}.
Given that the root of the tree corresponds to a set of $n$ objects, the leaf {\tt\small yes} node (containing one object) is at depth $\log n$ in the tree.
Therefore, since every intermediate node has exactly two children, the number of tasks generated in this case is $\Theta(\log n)$.

\begin{theorem} \label{th:sblog}
Assuming the dataset size is $N=n$, or \textcolor{\blue}{when there is no limit on the set query size}, (i) the maximum number of tasks generated by the {\small \bc} algorithm is $\Theta(\tau\log n)$, (ii) the upper bound is tight.
\end{theorem}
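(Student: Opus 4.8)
The plan is to analyze the binary execution tree $\mathcal{T}$ that \bc builds (a single tree rooted on all $n$ objects, since either $N=n$ or the unconstrained set size lets us take the first query to be the whole dataset), and to observe that the number of tasks is at most the number of nodes ever popped from the queue. I would then bound this by a \emph{width} times \emph{depth} estimate: show that (a) $\mathcal{T}$ has depth at most $\lceil\log_2 n\rceil$, since the set size halves at every split, and (b) every level of $\mathcal{T}$ that is processed before the algorithm halts contains only $O(\tau)$ nodes. The crux is (b), and I would derive it from the following invariant, proved by induction on the level $\ell$: \emph{after all nodes of level $\ell$ have been popped and before \bc returns \texttt{true}, the counter $cnt$ equals $a_\ell$, the number of ``yes'' nodes at level $\ell$.}

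To prove the invariant, the base case $\ell=0$ is immediate from the root handling (a ``yes'' root sets $cnt\gets1$, a ``no'' root ends the run). For the step, consider a ``yes'' node $u$ at level $\ell-1$; since its set contains an object of $\gee$, at least one of its two children is ``yes''. If both children are ``yes'', processing them adds $2$ to $a_\ell$ and, via the \texttt{checked} flag, increases $cnt$ by exactly $1$ (first child sets the flag, second increments); if exactly one child is ``yes'', it adds $1$ to $a_\ell$ and leaves $cnt$ unchanged (the ``no'' sibling is either pruned with its ``yes'' partner only setting \texttt{checked}, or skipped by \texttt{continue}). Summing over all level-$(\ell-1)$ ``yes'' nodes gives $a_\ell=a_{\ell-1}+b$ and $cnt$ increases by $b$, where $b$ is the number of such nodes with two ``yes'' children; with the induction hypothesis this yields $cnt=a_\ell$. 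Because \bc returns as soon as $cnt=\tau$, every fully processed level satisfies $a_\ell\le\tau-1$; and since ``yes'' nodes have two children while ``no'' nodes are leaves, level $\ell$ has at most $2a_{\ell-1}\le 2(\tau-1)$ nodes (this covers the last, possibly partially processed level too, as its predecessor was fully processed). Multiplying $O(\tau)$ nodes per level by the $O(\log n)$ depth gives the upper bound $O(\tau\log n)$ on the number of tasks.

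For tightness I would exhibit an instance matching $\Omega(\tau\log n)$: place exactly $\tau-1$ objects of $\gee$, one inside each block of $n/(\tau-1)$ consecutive objects, so that $cnt$ never reaches $\tau$ and \bc must traverse $\mathcal{T}$ all the way down. In this instance the top $\log_2(\tau-1)$ levels form an (almost) complete, all-``yes'' subtree, and below it $\mathcal{T}$ breaks into $\tau-1$ disjoint root-to-leaf paths, each of length $\log_2\!\big(n/(\tau-1)\big)$; along every such path each level forces one ``no'' query, for a total of $\Omega\big(\tau\log(n/\tau)\big)=\Omega(\tau\log n)$ tasks (using $\tau\ll n$, which holds for coverage thresholds, e.g.\ $\tau=50$). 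Together with the upper bound this proves the max number of tasks is $\Theta(\tau\log n)$ and that the bound is tight.

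\textbf{Main obstacle.} I expect the delicate part to be the invariant $cnt=a_\ell$: one must carefully track the interaction of the BFS queue order (left child enqueued before right), the pruning step that replaces a popped ``no'' left node by its ``yes'' right sibling without re-querying it, and the \texttt{checked} bookkeeping, and argue that across a whole level these mechanisms neither double-count nor miss a ``yes'' node --- the case analysis on whether the ``no'' child is the left or the right one is where mistakes are easy to make. A secondary, minor subtlety is the gap between $\log n$ and $\log(n/\tau)$ in the tightness argument, which I would dispatch by invoking the standing assumption that $\tau$ is far smaller than $n$ (typically a small constant), so the two are asymptotically the same.
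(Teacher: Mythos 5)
Your proof is correct and reaches the same bounds as the paper's, but your upper-bound bookkeeping is organized differently. The paper covers the internal nodes of the execution tree by the union of the root-to-leaf paths emanating from the at most $\tau$ \texttt{yes} leaves, each path of length at most $\log n$, and then charges every \texttt{no} leaf to its internal parent; you instead slice the tree by levels and prove the invariant that, once a level is fully processed, $cnt$ equals the number of \texttt{yes} nodes on that level, so every processed level has width $O(\tau)$ and the width-times-depth product gives $O(\tau\log n)$. Your invariant is a sharper statement than anything the paper makes explicit (the paper only asserts that the number of \texttt{yes} leaves is bounded by $\tau$, leaning on its Case~I discussion), and it forces you to confront exactly the delicate interactions you flag --- the \texttt{checked} flag and the sibling-replacement when a left child answers \texttt{no} --- which the path-union argument quietly sidesteps; the price is the extra case analysis, plus the small caveat that the invariant degrades at the leaf level where \texttt{yes} nodes of set size one stop splitting (harmless for the bound). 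Your tightness construction --- $\tau-1$ members of $\gee$ spread uniformly so the tree is an all-\texttt{yes} complete tree down to level about $\log\tau$ followed by $\tau-1$ disjoint paths of length $\log(n/\tau)$ --- is the same adversarial instance as the paper's, and you are in fact more careful than the paper in noting that $\tau\log(n/\tau)=\Theta(\tau\log n)$ requires $\tau\ll n$.
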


\techrep{
\begin{proof}
We first prove the upper bound.
The algorithm prunes the sets associated with {\tt\small no} answers.
Let us assign each {\tt\small no} leaf node to its immediate non-leaf parent in the execution tree.
As a result, the response to the root node itself is {\tt\small no}, the number of {\tt\small no} leaf nodes is bounded by the number of non-leaf nodes in the tree.
On the other hand, given that the tree does not extend beneath the {\tt\small no} nodes, each non-leaf node is the path from a {\tt\small yes} leaf node to the root.
Hence, the set of non-leaf nodes can be computed as the union of the paths from all {\tt\small yes} leaf nodes to the root.
The number of {\tt\small yes} leaf-node is bounded by $\tau$.
Besides, as observed in Case II (Figure~\ref{fig:no_case}) a path from the root of each leaf node is of length at most $\log n$.
Therefore, using the {\em union bound}~\cite{motwani1995randomized}-Chapter 3.1, the total number of nodes in the execution tree is at most $\Theta(\tau\log n)$.

We use an adversarial example to show that the bound is tight.
Consider a case where $\gee$ is uncovered, there are exactly $\tau-1$ objects belonging to $\gee$, and those are uniformly distributed across the dataset such that the answer to the first $2\tau-3$ set queries is {\tt\small yes}. This is similar to Figure~\ref{fig:yes_case} except that there are $\tau-1$ nodes in level $\log \tau$.
At this point, since corresponding sets for each of the $\tau-1$ node at level $\log \tau$ contain exactly one object belonging to $\tau$, the subtree beneath each node grows similar to Figure~\ref{fig:no_case} until sets of the {\tt\small yes} nodes are of size one.
The depth of the subtrees beneath each of the $\tau-1$ nodes in level $\log \tau$ is $\log n - log \tau = \log \frac{n}{\tau}$, each containing $\Theta(\log \frac{n}{\tau})$ nodes.
As a result, the total number of nodes in the execution tree is $\Theta(\tau \log \frac{n}{\tau})$ showing that the $\Theta(\tau\log n)$ upper bound is tight.
\end{proof}
}

\begin{lemma}\label{lem:costanalysis}\vspace{-2mm}
(Cost Analysis) The maximum number of tasks generated by the \bc algorithm is $\Theta(\frac{N}{n} + \tau \log{n})$.
\end{lemma}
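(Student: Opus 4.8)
The plan is to decompose the execution of \bc into two phases: the initialization phase that creates the $\lfloor N/n \rfloor$ subtree roots and issues one set query per root, and the subsequent phase that grows and processes the individual binary subtrees. For the first phase, the algorithm issues exactly one {\tt\small AskQuestion} call per root, so this contributes $\Theta(N/n)$ tasks; this is an exact count, not merely an upper bound, since every root is added to $Q$ and dequeued at least once before any pruning decision is made. For the second phase, I would argue that the work done across all the subtrees is dominated by the work that would be done if all of the ``yes'' answers were concentrated in a single subtree, and then invoke Theorem~\ref{th:sblog}. The key point is that the global stopping condition $cnt=\tau$ couples the subtrees: the total number of ``yes'' leaf nodes summed over \emph{all} subtrees is still bounded by $\tau$, because each such node (or each matched pair, via the {\tt\small checked} bookkeeping) increments the same global counter $cnt$.

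The main steps, in order, are: (1) Observe that the roots of the $\lfloor N/n\rfloor$ subtrees are all enqueued in the initial {\tt\small for} loop and each generates exactly one set query, giving the $\Theta(N/n)$ term. (2) For the roots whose answer is ``no'', the subtree is pruned immediately (the {\tt\small continue} branch for a null parent) and contributes nothing further. (3) For the roots whose answer is ``yes'', the subtree grows exactly as in the $N=n$ analysis: by the argument in the proof of Theorem~\ref{th:sblog}, the number of non-root nodes in such a subtree is bounded by (number of ``yes'' leaves in that subtree) times $\log n$, via the union bound over root-to-leaf paths of length at most $\log n$. (4) Sum over all subtrees: since $\sum_{\text{subtrees}} (\text{\#yes leaves}) \le \tau$ — this is precisely the invariant that $cnt$ never exceeds the total number of ``yes'' leaves and the algorithm halts at $cnt=\tau$ — the total contribution of the second phase is $\Theta(\tau \log n)$. (5) Add the two phases: $\Theta(N/n + \tau\log n)$. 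For the matching lower bound, reuse the adversarial instance from the proof of Theorem~\ref{th:sblog} for the $\tau\log n$ part, padded so that $N/n$ subtrees are all queried (e.g.\ all roots answer ``yes'' but only one subtree contains the $\tau-1$ planted objects), which forces $\Omega(N/n + \tau\log n)$ tasks.

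The step I expect to be the main obstacle is (4): making precise that the $\tau$ budget on ``yes'' leaves is genuinely \emph{shared} across subtrees rather than available afresh to each subtree. One has to be careful about the {\tt\small checked} mechanism — a ``yes'' answer only increments $cnt$ when the sibling has already answered ``yes'' — so the accounting should really be ``each pair of sibling ``yes'' leaves contributes one unit to $cnt$,'' and one must check that this still yields $O(\tau)$ total ``yes'' leaves (it does, up to a factor of $2$, which is absorbed into the $\Theta$). A secondary subtlety is that a subtree rooted at a ``yes'' node need not be a \emph{complete} binary tree on $\tau$ leaves as in Case~I; it is some arbitrary pruned binary tree, but the union-bound-over-paths argument of Theorem~\ref{th:sblog} applies verbatim regardless of shape, so no extra work is needed there. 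Everything else is routine bookkeeping.
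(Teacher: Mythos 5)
Your proposal is correct and follows essentially the same route as the paper's proof: the $\Theta(N/n)$ term from the level-1 roots, plus the observation that the $\tau$ budget on \texttt{yes} leaves is shared globally across all subtrees, so the union of root-to-leaf paths (each of length at most $\log n$) bounds the remaining work by $\Theta(\tau\log n)$. Your extra care about the \texttt{checked} pairing (a factor of $2$ absorbed into the $\Theta$) and the explicit padded adversarial instance for tightness are refinements the paper leaves implicit, but the argument is the same.
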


\techrep{
\begin{proof}
Given the upper-bound size $n<N$, the algorithm needs to partition $\dee$ into multiple sets queries, each of size $n$, as in the first level of Figure~\ref{fig:randomTree}.
The algorithm stops as soon as it finds $\tau$ queries with {\tt\small yes} answers.
In the worst case, however, it may require asking all levels of the $\frac{N}{n}$ set queries at level 1.
Suppose the number of level-1 set queries with {\tt\small yes} answer is less than $\tau$.
In that case, let $r_1,\cdots,r_t$, where $t<\tau$, be the level-1 {\tt\small yes} queries -- each being the root of the sub-tree beneath them.
As argued in the proof of Theorem~\ref{th:sblog}, since (a) there always are at most $\tau$ leaf nodes with {\tt\small yes} answers, (b) the number of {\tt\small no} leaf nodes is bounded by the number of intermediate nodes, (c) the set of all intermediate node in all sub-trees can be represented as the union of the paths from each {\tt\small yes} leaf to its corresponding root, and (d) the length of each path is at most $\log n$,
the maximum number of nodes in all sub-trees is $\Theta(\tau\log n)$.
Therefore, the maximum size of the execution tree is $\Theta(\frac{N}{n}+\tau\log n)$.
\end{proof}
}

Before concluding this section we would like to note that lower-bound on the maximum number of set queries an algorithm need to issue for deciding if $\gee$ is covered is $\frac{N}{n}$. That simply is because 
$\frac{N}{n}$ queries are needed to include each object in $\dee$ in at least one set query. In cases where $\gee$ is uncovered, all objects should be queried to verify $\gee$ is indeed uncovered.
Comparing this lower-bound with the maximum number of tasks generated by the \bc algorithm, one can verify that \bc has only a small {\em additive} overhead of $\Theta(\tau\log n)$ from the lower-bound.

\vspace{-2.5mm}
\subsection{Coverage Identification using {\small\bc}}
\textcolor{\blue}{Recall that the algorithm presented so far considered the single binary attribute case.}
Particularly, given a group $\gee$, \bc efficiently interacts with the crowd to identify if $\gee$ is covered.
This algorithm can be applied for coverage identification for different scenarios of sensitive attributes, as explained in the following.

\vspace{-2mm}
\subsubsection{Multiple Non-intersectional Groups} In a case where there is one attribute-of-interest $x$, let $c$ be the cardinality of $x$. In this case, each value of $x$ specifies a group $\gee_i$. To
    study coverage in this case, one needs to identify if each of the groups $\gee_i$ is covered in $\dee$. This can simply be done by running the \bc algorithm $c$ times, where the $i$-th run checks if the group $\gee_i$ is covered. As a result, following Lemma~\ref{lem:costanalysis}, the maximum number of tasks generated for this case is then $Theta\big(c(\frac{N}{n} + \tau \log{n})\big)$.

\vspace{-2mm} 
\subsubsection{Intersectional Groups} 
For cases with intersectional groups, the intersections of attribute values on $\mathbf{x}=\{x_1,\cdots,x_d\}$ specify different subgroups, while the objective is to identify the uncovered region in form of {\em maximal uncovered patterns (MUPs)}~\cite{asudeh2019assessing}.
Consider the graph representation of patterns associated with the attributes $\mathbf{x}$. 
A real-life example of this case is represented with {\small \tt gender} and {\small \tt race} attributes in Figure \ref{p:graph}.
The first level of the graph contains patterns, such as {\tt\small female-X} or {\tt\small X-black} with one attribute-value specified for them.
A child of a node at level $\ell$ is a node at level $\ell+1$ with one more attribute-value specified than its parent(s). 
For example, the level-2 pattern {\tt\small female-black} (representing the subgroup of {\tt\small female black} individuals) is a child of patterns {\tt\small female-X} and {\tt\small X-black}.
An MUP is a pattern that is uncovered itself but all of its parents are covered.
For example, {\tt\small female-black} is an MUP if it is uncovered but both its parents {\tt\small female-X} and {\tt\small X-black} are covered.

In order to specify MUPs using the \bc algorithm, we consider the fully-specified subgroups (at maximum level), each being the intersection of $d$ attribute values. 
The Cartesian product of the values of $x_1$ to $x_d$ determine these subgroups.
For example, in Figure~\ref{p:graph} the nodes at level 2 of the graph (e.g. {\tt\small female-asian}) show the fully-specified subgroups.
Using $c_1,\cdots,c_d$ as the cardinality of the attributes $x_1,\cdots,x_d$, the number of fully-specified subgroups is $m = c_1\times c_2\times\cdots\times c_d$.
Let $\gee_1,\cdots,\gee_m$ be the set of these subgroups.
Finding the MUPs is then possible by combing the Pattern-combiner algorithm~\cite{asudeh2019assessing} with the algorithm \bc.
At a high level, the pattern-combiner starts from the bottom of the pattern graph (fully-specified subgroups), and counts the coverage for each of those, eliminating the covered nodes (with all of their parents).
Then moving up in the graph, for the nodes that have not been pruned, it uses the counts of their children to check if those are covered, pruning the covered nodes and their parents while identifying the MUPs (uncovered nodes that all of their parents are pruned).

Algorithm \ref{alg1:coverageBinary} enables running pattern-combiner noting that \bc finds the {\em exact count} for an uncovered group. In such cases the
lower-bound variable ($cnt$ in Algorithm~\ref{alg1:coverageBinary}) contains the number of items belonging to the uncovered group.
Additionally, pattern-combiner only needs the counts for the fully-specified subgroups that are uncovered as it already prunes the covered nodes. 

\vspace{-4mm}
\section{Practical Optimizations based on Group Aggregation}\label{s:agg}
\vspace{-2mm}

So far in previous sections, we designed an efficient algorithm to detect if a certain group is covered in $\dee$. We further explained how this algorithm enables coverage identification for different cases of attributes of interest.
In this section, we propose practical heuristics for cases with multiple non-intersectional and intersectional groups.
In particular, we note that independently running the \bc algorithm for different groups specified by $\mathbf{x}$ to identify coverage, misses the opportunity to reuse the information collected during each run.

\submit{
\begin{figure}[!tb]
\centering
    \includegraphics[width=\columnwidth]{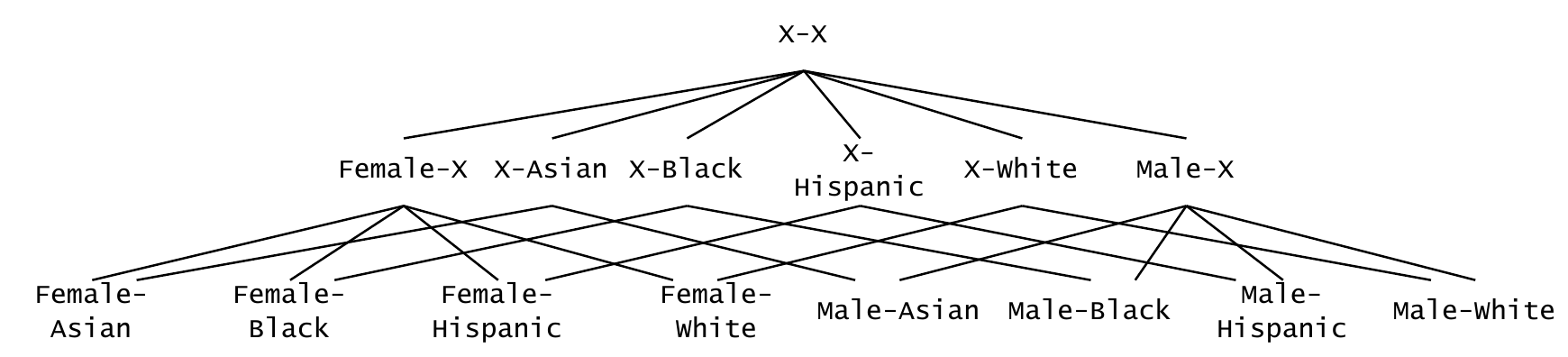}
\vspace{-7mm}\caption{The pattern graph for $\mathbf{x_1}$={\small \tt gender} and $\mathbf{x_2}$={\small \tt race}}
\label{p:graph}
\vspace{-6mm}
\end{figure}
}
\techrep{
\begin{figure*}[!tb]
\centering
\resizebox {1.8\columnwidth} {!} {
    \includegraphics[width=\columnwidth]{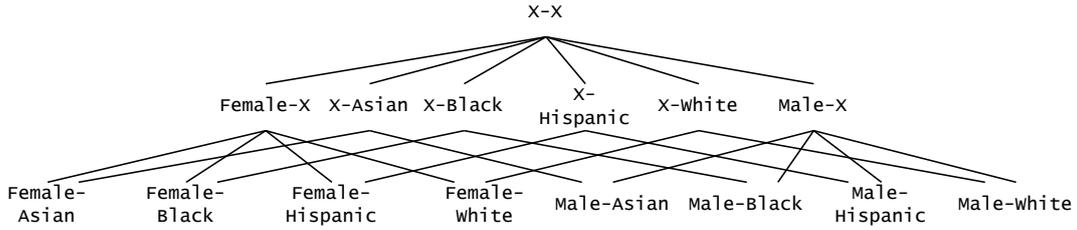}
}
\caption{The pattern graph for $\mathbf{x_1}$={\small \tt gender} and $\mathbf{x_2}$={\small \tt race}}
\label{p:graph}
\end{figure*}
}

First,  to avoid labeling objects multiple times we move the labeled objects from the unlabeled set $\dee$ to the labeled set $\el$.
Our main idea, however, is to form set queries that combine multiple demographic groups in one task -- instead of one group.
Specifically, consider a case where two or more demographic groups in the dataset are uncovered and there are very few items corresponding to them such that after combining them to one ``{\em super-group}'', the result is still uncovered.
For example, suppose the races {\tt\small Native American}, {\tt\small Asian}, and {\tt\small Middle Eastern} are on the absolute minority that the summation of counts for all these three groups is less than the coverage threshold.
In this case, instead of running the \bc algorithm once for each of them, we can run it for the super-group once.
To do so, we change the set query to combine the individual groups with {\em OR} predicate. 

Nevertheless, the challenge is that we do not have prior knowledge about the dataset $\dee$ to form the super-groups.
\textcolor{\blue}{In order to obtain such information, we consider {\em estimating the counts using sampling}.}
To do so, 
we add a sampling phase 
at the beginning of our method, 
in which a small random subset of the dataset is presented to the crowd as {\em point queries}, with their task being to label the items. 
The results from this step gives us an estimation of the demographic groups frequencies in the dataset. Based on this estimation, the algorithm will decide which groups to aggregate as super-groups. 
\textcolor{\blue}{Next, we need to determine the sample-set size.
Our intuition is that point queries are efficient for
verifying the coverage of the majority group, since we expect to discover enough of those after $\Theta(\tau)$ point queries.
Therefore, we can first issue the point queries to identify the majority group, while at the same time, we 
}
{\em piggyback} on the point query results to collect information about minorities and form the super-groups.
Following this idea, we consider labeling a random subset of size $c\tau$ of $\dee$ at the beginning of the algorithm, where $c$ is a small constant (we found $c=2$ as a good choice in our experiments).
Note that in cases where initial point queries do not find at least $\tau$ objects from the majority group(s), the algorithm effectively identifies a subset of these instances and needs fewer queries to get to the coverage threshold. As a result, \bc rapidly stops detecting them as covered.


One drawback of forming the super-groups is when the result for a set of super-groups is {\em covered}. In this case, we could not know whether one, two, or all groups are covered, and thus, we need to examine each separately. 
In other words, the aggregation strategy will incur a penalty cost when the super-groups are covered.


\begin{algorithm}[!tb]
    \caption{\nbc}\label{alg:coverageSmul}
    \begin{algorithmic}[1] \small
    \Require{ Dataset $\dee$, dataset size $N$, subset size upper bound $n$, coverage threshold $\tau$, target groups $\g$, \textcolor{\blue}{sample-size parameter $c=2$}}
    \Ensure{Coverage of all groups in $\g$}
        \State $\dee, \el \gets$ {\sc LabelSamples} $(\dee,\tau,c)$ {\tt\small ~~//obtain $c\tau$ random labels}
        \textcolor{\blue}{
        \State $\g_{agg} \gets${\sc Aggregate}$(\el,\tau,\g)$ {\tt\small //form the super-group}
        \State $cov\gets$ empty set
        }
        \For{$\Gee \in \g_{agg}$}:
            \State $\tau' \gets \tau-\sum_{\gee\in\Gee}\el.${\sc count}$(\gee)$
            \State \textcolor{\blue}{$cvg, cnt \gets $\bc $(\dee, N,n,\tau',\Gee)$}
            \State {\bf if} $|\Gee| = 1$ {\bf then} $cov$.{\sc add} \textcolor{\blue}{($\langle\Gee,cvg, cnt\rangle$)}; {\bf continue}
            \If{$cvg=$ {\bf true}}
            {\tt\small //if the super-group $\Gee$ is covered}
                \For{$\gee \in \Gee$}:
                    \State $\tau' \gets \tau-\el.${\sc count}$(\gee)$
                    \State $cvg, cnt \gets$ \bc $(\dee, N,n,\tau',\gee)$
                    \State $cov$.{\sc add} ($\langle \gee,cvg,cnt\rangle$)
                \EndFor
        \EndIf
        \State {\bf else} {\bf for} $\gee\in\Gee$ {\bf do:} $cov$.{\sc add} ($\langle\gee,\mbox{\bf false},cnt\rangle$)
        \EndFor
    \State {\bf return} $cov$
    \end{algorithmic}
\end{algorithm}

\color{\blue}
\vspace{-2mm}
\stitle{{\sc aggregate} function}\footnote{The pseudo codes of the functions are provided in the \submit{technical report~\cite{techrep}}\techrep{appendix}.} 
(Line 2 of Algorithm~\ref{alg:coverageSmul}) Let $\g$ be the list of groups in one attribute or the set of fully-specified subgroups in the intersection of multiple attributes.
The count estimations based on the samples collected in the labeled set $\el$ are utilized to set up the super-groups for $\g$.
We calculate the expected number of instances corresponding to each group in the dataset based on their occurrence in the sample. 
Let $\el.\mbox{\sc count}(\gee)$ return the number of objects in group $\gee$ that belong to $\el$. 
Since point queries are selected randomly, $\el$ is a random sample of $\dee$. Hence, the expected size of $\gee$ is $\EX[|\gee|] = N(\el.\mbox{\sc count}(\gee))/|\el|$.
If the expected number is less than the coverage threshold, it is likely that this particular group is uncovered in the dataset and vice versa. 
Similarly, if the summation of the expected numbers for a set of groups is still less than $\tau$, the super-group formed by merging them is expected to be uncovered.
To use this idea for forming the super-groups, we first sort the groups based on their count values in $\el$ ascending. This helps to put the minority groups nearby and merge them as super-groups.
Then we makes a pass over the sorted groups
while maintaining the sum over their expected coverage.
So far as the expected sums are less than $\tau$, the algorithm keeps merging the groups into a super-group, and then it moves to the next super-group.
It finally returns the list $\g_{agg}$ of the super-groups.
\color{black}



\vspace{-2mm}
\stitle{Multiple and Intersectional Groups Coverage}\label{s:single}
Using the idea of merging minority groups into super-groups, Algorithm~\ref{alg:coverageSmul} specifies the uncovered groups for the cases where there exist multiple, non-intersectional groups in a single attribute.
In particular, for every group $\Gee$ in the set of aggregated groups $\g_{agg}$, the algorithm first specifies the number $\tau'$ of instances it needs to observe before it can conclude $\Gee$ is covered.
Next it runs the \bc algorithm for identifying the coverage of $\Gee$ in $\dee$.
If $\Gee$ is not a super-group, the algorithm directly adds the coverage result of $\Gee$ to the output.
For cases where $\Gee$ is a super-group, if $\Gee$ is covered, the algorithm fails to conclude if groups $\gee\in\Gee$ are covered or not. Therefore, it reruns the \bc algorithm for all of such individual groups $\gee$.
On the other hand, for cases where the super-group $\Gee$ is uncovered, the algorithm concludes 
that all groups in $\Gee$ are uncovered.

We can take advantage of the above technique for multiple attributes, where we are interested in identifying the coverage of each of the individual and the intersectional groups. Figure \ref{p:graph} demonstrates the pattern graph for two attributes of {\small \tt gender} and {\small \tt race}. To solve the problem for this case, we take on a similar idea to the {\sc Pattern-Combiner} algorithm\cite{asudeh2019assessing}. The objective of the {\sc Pattern-Combiner} algorithm is to find MUPs (maximal uncovered patterns) in a dataset. As mentioned before, an MUP is a pattern that is uncovered but all of its parents are covered. Consequently, all of the children of MUP are uncovered as well. For example, in the {\small \tt race} and {\small \tt gender} attributes case with $\tau=50$, assuming that we find 15 instances of {\small \tt Asian-Female} and 20 instances of {\small \tt Asian-Male}, we can conclude that {\small \tt Asian} group with total instances of 35 is uncovered as well.
On the other hand, \textcolor{\blue}{if there were 28 {\small \tt Asian-Female} and 32 instances of {\small \tt Asian-Male}, we could conclude that {\small \tt Asian} group is covered, without any additional crowdsourced tasks.}

We use this idea to reduce the problem of identifying the coverage of multiple attributes to identifying the coverage of the fully-specified subgroups at the maximum level. We can see that this problem can be easily transformed into solving it for multiple, non-intersectional groups. It is noteworthy that the aggregation process for this special case requires that only the nodes with the same parent be aggregated with each other. To this end, we used a flag ({\em multi}) in our aggregation algorithm to distinguish between the two cases.
Having identified the coverage of the subgroups using the \nbc algorithm, we then proceed to identify the coverage of all other patterns in the upper levels. Algorithm \ref{alg:coverageMutli} describes the details of the discussed method.
\begin{algorithm}[!tb]
    \caption{\mc} \label{alg:coverageMutli}
    \begin{algorithmic}[1] \small
    \Require{Dataset $\dee$, dataset size $N$, subset size upper bound $n$, coverage threshold $\tau$, set of attributes $x$}
    \Ensure{ Coverage for all individual and intersecting groups in $x$}
        \State $\el \gets${\sc LabelSamples}$(\dee,\tau)$
        \State Let $\g$ be the set of fully-specified sub-groups at the max level
        \State \textcolor{\blue}{$\g_{agg} \gets${\sc Aggregate}$(\el,\tau,\g, multi=true)$}
        \State $cov \gets$\nbc$(\dee, N,n,\tau,\g_{agg})$
        \State Let $Q$ = an empty queue
        \State {\bf for} \textcolor{\blue}{$\langle \gee, cvg, cnt \rangle \in cov$} {\bf do} $Q.${\sc add}$(\langle \gee, cvg, cnt \rangle)$
        \While{$Q$ is not empty}
            \State $T\gets Q.del\_top()$
            \If{$T.cvg=$true}
                \State \textcolor{\blue}{{\bf foreach} $p$ in $T$.ancestors {\bf do:} $Q.${\sc add}$(\langle p,$true$, T.cnt \rangle)$}
            \Else
                \State $cnt \gets 0$
                \For{$\forall P \in T.parent.children - T$}
                    \State $cnt\gets cnt+P.cnt$; $Q.pop(P)$
                \EndFor
                \State {\bf if} $cnt \ge \tau$ {\bf then} $Q.${\sc add}$(\langle T.parent,$true$, T.cnt \rangle)$
                \State {\bf else} $Q.${\sc add}$(\langle T.parent,$false$, T.cnt \rangle)$
            \EndIf
        \EndWhile
    \end{algorithmic}
\end{algorithm}

\vspace{-3mm}
\section{Utilizing Existing Predictors}\label{sec:classifier}

\textcolor{\blue}{
While as discussed in \S~\ref{sec:intro}, {\em solely} employing ML models for coverage identification might be a problematic,
}
in presence of accurate and well-developed models, we should be able to utilize them in order to reduce the coverage identification cost -- i.e., the number of crowdsourcing tasks.
In this section, we adjust our core algorithm for this purpose. In such settings, instead of calling \bc in subsequent algorithms, one should call our classifier-aware algorithm (\css \xspace -- Algorithm~\ref{alg:classifier}).

\textcolor{\blue}{
Using pre-trained classifiers on the dataset gives us the prediction groups.
We would like to use the predicted labels in order to reduce the prediction cost.
}
{\em We still need to validate the \underline{correctness}} of the results obtained by the classifier in order to determine the coverage of a given group. 
Imagine an example in which a gender classifier applied to a dataset, classifying a set $\mathbf{f}$ as females. In order to identify the coverage for the female group, the main idea is to eliminate the falsely identified females (false positives), namely males, from $\mathbf{f}$. 
To this end, we apply a similar idea to what we did in \bc algorithm: we create a crowdsourcing task with all the points in the identified females set $\mathbf{f}$, and ask a {\em reverse question}: ``{\tt\small Is there any individual in this set that \underline{is NOT female}?}''. 
If the answer is yes, 
it means there are some false positives in this set. Therefore,
we take the divide-and-conquer approach
by dividing the set into two halves and repeating the question
until all false positive instances are eliminated. 


A performance issue with this strategy, however, happens when the false positive rate of the classifier for the given group is high. In such settings the divide and conquer strategy keeps dividing the sets into fine granularity, resulting in many small set queries to ask.
In such cases, labeling the data points in the female set using point queries to verify the classifier's label might be more efficient.

Following this observation, we propose a sampling phase to estimate the precision of the classifier on the positive group , \textcolor{\blue}{i.e., the group on which we would like to verify coverage} (females in this example). Similar to our proposed method in \S \ref{s:agg}, we choose a small, random sample from the identified females set (in our experiments, we found that a sample size of 10\% of the set classified as the given group would be a good choice). In the next step, we ask the crowd to label the sample using point queries and estimate the precision of the classifier, comparing the classifier label and the true label. We experimentally found that if 25\% of the sample are false positives, it is safe to say that the precision of the classifier on the positive group is sub-optimal for the group coverage identification task.
Based on the estimated precision of the classifier on the target group, our algorithm decides whether to eliminate the false positive objects using either partitioning or labeling strategies. 

At the end of this process, we will have a set containing only objects associated with the queried group $\gee$ ({\small \tt female} in this example). If we already have at least $\tau$ instances of a group $\gee$, we can determine the coverage and stop the process. 
However, assuming that the number of discovered instances $c'$ is less than $\tau$, now we have to find at least $\tau-c'$, {\em false negative} instances of $\gee$, i.e., the instances that belong to $\gee$ but are classified as not belonging to $\gee$ before we can conclude $\gee$ is covered (or to verify the number of false positive is less than $\tau-c'$ and hence $\gee$ is uncovered).
This can be done by applying the algorithm \bc on the set of objects classified as not $\gee$ with the threshold $\tau-c'$.
The details of the proposed method can be found in Algorithm \ref{alg:classifier}.

Our experiments in \S~\ref{exp:pred}, utilizing various pre-trained classifiers show that our optimization for coverage identification for classified datasets can achieve remarkable performance.


\begin{algorithm}[!tb]
    \caption{\css} \label{alg:classifier}
    \begin{algorithmic}[1] \small
    \Require{Dataset $\dee$, dataset size $N$, subset size upper bound $n$, coverage threshold $\tau$, target group $\gee$}
    \Ensure{ Coverage of $\gee$}
        \State \textcolor{\blue}{Let $G$ be the set of tuples in $\dee$ with predicted label $\gee$}
        \State \textcolor{\blue}{Let $S$ be a sample of $G$}
        \State \textcolor{\blue}{{\bf for} $t\in S$ {\bf do} $\el$.add$(\langle t, ${\sc PointQuery}$(t)\rangle)$}
        \State {\bf if} $\el.${\sc count}$(\gee) > 0.25 |G|$ {\bf then} $\Gee \gets${\sc Partition}$(\dee, G, n)$
        \State {\bf else} $\Gee \gets${\sc Label}$(\dee, G, \tau)$
        \State {\bf if} $|\Gee| \ge \tau$ {\bf then} {\bf return} true
        \State {\bf else} {\bf return} \bc$(\dee-G,N,n,\tau-|G|,\gee)$
    \end{algorithmic}
\end{algorithm}

\vspace{-2mm}
\section{Experiments}

In this section, we evaluate the performance of our proposed methods for coverage identification of single and multiple demographic groups. Additionally, we explore our heuristic of coverage detection for a single demographic group ({\small \tt gender}) on pre-processed image data with pre-trained classifiers. Finally, we deploy our system on a real crowdsourcing platform, Amazon Mechanical Turk, to explore the performance of the discussed methods with real workers.

\vspace{-3mm}
\subsection{Experiment Setup}
The experiments were conducted using both synthetic and real datasets. The algorithms were implemented in Python.
\begin{itemize}[leftmargin=*]
    \item \textbf{Synthetic datasets:} To thoroughly assess our algorithms, we create synthetic data with a variety of distributions. As an example, in a single demographic group problem setting for {\small \tt gender}, a data point can be either {\small \tt \{'F', 'M'\}}. In these experiments, we simulate the behavior of the crowdworkers in answering queries.
    \item \textbf{Image datasets:} We use image datasets for the purpose of experiments on MTurk and applying pre-trained classifiers on the data. In this section, we used slices of FERET DB \cite{feretdb} and UTKFace \cite{zhifei2017cvpr}.
    {\em FERET DB} is a dataset of 14,126 images of 1199 individuals taken from a variety of angles. {\em UTKFace} is also an image dataset consisting of over 20,000 annotated facial images.
\end{itemize}

\noindent{\bf Evaluation Plan}:
We evaluate the performance of {\small \bc} and the optimizations for multiple non-intersectional and intersectional groups as well as classified data using pre-trained classifiers. We report the number of tasks required for each experiment setting.
\textcolor{\blue}{
We use a straightforward baseline, called {\small \sc Base-Coverage}, to identify group coverage as a baseline method to compare our algorithm's results to. In this method, each task is created containing only one single data point and is asked about.} Two outcomes are possible for this algorithm: 
either at some point, $\tau$ instances of objects associated with the group are identified and hence, the group is covered, or the algorithm goes through all the data and determines that the group is  uncovered.

\noindent{\bf Default Values}:
To evaluate the performance of our algorithms, we fix the value of some of the introduced parameters in \S \ref{s:binary}-\ref{sec:classifier}. The default value of $c$ is 2, and we fix $\tau$ and $n$ as 50 in all experiments except their respective parameter analysis experiments.

\vspace{-3mm}
\subsection{Summary of Results}

Our proposed \bc algorithm achieved remarkable results in group coverage identification in our experiments. Even at the worst possible case (where the number of instances associated with the group in the dataset is close to the coverage threshold), both the synthetic and MTurk experiments with the real crowdworkers showed that our algorithm needs a significantly small number of tasks compared to the size of the dataset in order to achieve results. We also show that the upper bound discussed in \S~\ref{single_analysis} is in fact tight.
Our optimizations on multiple group cases (both intersectional and non-intersectional) proved to be effective in most cases compared to a brute force approach utilizing the \bc algorithm to identify the coverage of multiple groups. Additionally, the optimization of classified data using pre-trained classifiers achieved notable results in most cases, decreasing the number of required tasks by approximately 80\% in some.

\vspace{-3mm}
\subsection{Proof of Concept}

\subsubsection{Amazon Mechanical Turk} \label{exp:amt}

In this experiment, we evaluate our proposed method and the crowd's performance on a live crowdsourcing platform. We defined a {\small \tt female} coverage identification task and published our HITs on Amazon Mechanical Turk. 
Each HIT contained a set of initial $n=50$ images to present a reasonable workload to the crowd, with a maximum assignment to 3 workers for quality control. The layout of the HITs was designed as Figure \ref{fig:squery}. The crowd was asked to answer the questions with yes or no. 

\textcolor{\blue}{
To examine how each method affects the final outcomes, we use three quality control techniques: Majority Vote, Qualification Test, and Rating \cite{daniel2018quality}.
We adopt the off-the-shelf \textit{Majority Vote} as a group assessment to control the quality of outputs, in which we assigned the same HIT to 3 workers and took the majority vote as the truth. We also experiment with \textit{Rating} and \textit{Qualification Test} as two types of individual assessments to further verify that our workers have the appropriate skills for the tasks.
For the rating assessment, we measure the workers' performance using \textit{NumberHITsApproved} and \textit{PercentAssignmentsApproved} in MTurk and only allow workers who meet a certain criterion for approved HITs and assignments to perform our tasks.
Additionally, we designed a qualification test to verify the workers' competence before granting them access to our HITs as another method for quality control. This test has a similar layout to the original HITs, which also served the purpose of familiarizing the workers with the tasks.
One interesting observation that was made is that despite the relatively low number of assignments per HIT, only 1.36\% of the total 660 answers from the crowd in all experiments (in 220 HITs) were incorrect which did not affect the final result in each experiment run. Additionally, we did not detect a significant difference between the experiments with or without individual assessments which further supports our idea about the tasks being relatively easy and straight-forward for the crowd and fault-proof to an extent.
}


We employed the fixed price model as our pricing strategy. In our first set of experiments, each HIT price was set to $\$0.1$. In the next experiments, we decreased the reward for each HIT to $\$0.05$. Interestingly, this did not discourage the workers to accept and complete our tasks. Overall, we paid a total of \textcolor{\blue}{$\$44.1$ to the workers and $\$8.82$ to Amazon MTurk as service charges.}

We used two different subsets of the FERET dataset. With $\tau = 50$, the results from each experiment setting can be found in Table \ref{table:mturk} with a comparison to the baseline {\small \sc Base-Coverage} method and our theoretical upper bound ($\frac{N}{n}+\tau \log(n)$).

\begin{figure}[t]
    \centering
    \footnotesize
    \begin{tabular}{||P{0.36\linewidth} |P{0.14\linewidth}|P{0.14\linewidth}|P{0.14\linewidth}||}
    \hline
    {\bf FERET DB ({\small \tt females}=215, {\small \tt males}=1307)} & \bf \bc \#HITs & \bf {\sc base-coverage} \#HITs& \bf {\sc upper-bound} \#HITs  \\ \hline
      QC: Majority Vote  & 74 &342 & 115 \\ \hline
     \textcolor{\blue}{QC: Qualification Test, Majority Vote}  & \textcolor{\blue}{75} & \textcolor{\blue}{386} & \textcolor{\blue}{115} \\ \hline
     \textcolor{\blue}{QC: Rating (PercentAssignmentsApproved >= 95, NumberHITsApproved >= 100), Majority Vote} & \textcolor{\blue}{71} & \textcolor{\blue}{284} & \textcolor{\blue}{115} \\
     \hline
    \end{tabular}
    \captionof{table}{\color{\blue}Coverage identification for {\small \tt female} on Amazon MTurk}
    \label{table:mturk}
    \vspace{-8mm}
\end{figure}


\subsubsection{Existing predictors} \label{exp:pred}
To investigate the performance of the existing predictors on image datasets and our strategy to detect the coverage of a dataset utilizing these models, we ran a number of experiments using DeepFace \cite{serengil2020lightface} and another CNN-based facial demographic classifier \cite{facialCNN} to predict the gender of the individual in datasets. We used a subset of images of unique individuals from the FERET dataset \cite{feretdb}, and two 3,000-point subsets of UTKFace \cite{zhifei2017cvpr} with different distribution of females to evaluate the results for both covered and uncovered cases. We applied DeepFace with {\small \tt opencv} and {\small \tt retinaface} as the underlying face detectors. \textcolor{\blue}{Next, we passed the predicted labels and the females set detected by the classifier as the input to our \css algorithm. Our algorithm chooses between "partitioning" and "labeling" to eliminate false positives in the identified female set. It picks "partitioning" if the classifier is at least 25\% precise on a sample of the set, and "labeling" otherwise.} 
The results are reported in Table \ref{table:classifier}. \textcolor{\blue}{We also include the coverage detection results using \bc to compare them with those of \css.}

\begin{figure*}[!tb]
\centering
\footnotesize
\begin{tabular}{||P{0.23\linewidth} ||P{0.13\linewidth} |P{0.05\linewidth} |P{0.09\linewidth}||P{0.13\linewidth}|P{0.07\linewidth}|c||}
 \hline
 \bf & \multicolumn{3}{c||}{\bf \textcolor{\blue}{Existing classifiers' performance}} & \multicolumn{2}{c|}{\bf \textcolor{\blue}{\css}}  & \bf \textcolor{\blue}{\bc} \\
\bf dataset & \bf classifier & \bf accuracy & \bf precision on female group & \bf false positive elimination strategy & \bf \css \#HITs  & \bf \#HITs \\
\hline
\multirow{2}{*}{FERET DB ({\small \tt females}=403, {\small \tt males}=591)} & DeepFace (opencv) & 79.57 & 99.5 & Partition & 14 & 80 \\
& DeepFace (Retinaface) & 84.1 & 100.0 & Partition & 17 & 80\\
& BaseCNN & 64.48 & 59.19 & Label & 84 & 80\\
\hline
\multirow{2}{*}{UTKFace ({\small \tt females}=200, {\small \tt males}=2800)} & DeepFace (opencv) & 93.56 & 52.02 & Label & 97 & 51 \\
& DeepFace (retinaface) & 94.16 & 56.15 & Label & 89 & 51 \\
& BaseCNN & 97.6 & 74.8 & Label & 69 & 51 \\
\hline
\multirow{2}{*}{UTKFace ({\small \tt females}=20, {\small \tt males}=2980)} & DeepFace (opencv) & 96.53 & 8.0 & Label & 134 & 221 \\
& DeepFace (retinaface) & 96.43 & 10.09 & Label & 143 & 221\\
& BaseCNN & 97.6 & 21.59 & Label & 122 & 221 \\
\hline
\end{tabular}
\captionof{table}{The results of female group coverage detection on gender classified datasets}
\label{table:classifier}
\vspace{-9mm}
\end{figure*}

For each experiment, the accuracy and the precision of the classifier on the dataset are calculated separately. As discussed before, the accuracy of some classifiers can vary by a large margin on different data. Moreover, a high level of accuracy does not necessarily guarantee reasonable precision in predicting the class of data points. More specifically, both of the classifiers had a relatively high accuracy in the classification task on the UTKFace dataset, but both also had a low precision in their prediction for the female demographic group, which further proves the fact that the performance of the existing predictor for sensitive groups is questionable in many cases.

In addition, the results show that our heuristics make the right decision for which false positive elimination strategy to use in most cases, leading to significantly fewer necessary tasks to get the result. Compared to our \bc algorithm used standalone, the proposed techniques can produce significantly better results in most cases and still competitive results in others. 

\textcolor{\blue}{
Before concluding this section, we would like to remind that our algorithms for coverage multiple non-intersectional and intersectional sensitive attributes run the \bc algorithm in their core.
In other words, those algorithms can be viewed as issuing \bc multiple times. So we expect similar results for those cases, using the real crowd.
}

\textcolor{\blue}{
\subsection{Downstream Tasks Consequences}\label{exp:validation:downstream}
In this experiment, we show how the lack of coverage may cause model performance disparity (unfairness) in the downstream tasks. In particular, using two computer vision tasks we observe that (a) lack of coverage may cause model performance disparity for an uncovered group, and (b) resolving the lack of coverage reduces the performance disparity for the uncovered groups. 
\subsubsection{Drowsiness Detection}
Drowsiness detection systems are used to prevent accidents that are caused by drivers who fell asleep while driving. MRL eye dataset~\cite{Fusek2018433} is a large-scale human eye dataset containing infrared images captured in a variety of lighting conditions from 37 people. While some of the subjects in the dataset wear glasses, we intentionally disregarded the images of such subjects to make them uncovered and created a sample of size 26480 images belonging to two classes of {\em open} (14279 images) and {\em closed} eyes (12201 images). Following the same procedure, we generated 10 datasets and repeated each experiment 10 times, using different datasets. 
Using this as the training data, we built a CNN model, and evaluating the model, we observed that while it has an average overall accuracy of 91.5\%, the average accuracy for the spectacled subjects is only 81\%. Next, in order to confirm that the issue was due to the lack of coverage, we gradually added 20, 40, 60, 80, and 100 images from the uncovered region back to each of the classes of {\em open} and {\em closed} in the training data, and retrained and evaluated the model. The results are illustrated in Figure~\ref{fig:drowsiness}. With an increase in the number of samples taken from the uncovered group, we observed a reduction in the accuracy/loss disparity of the model between a randomly sampled test set and a sample consisting exclusively of spectacled subjects.   
\subsubsection{Gender Detection}
To further verify our proposal, we repeat a similar procedure, using UTKFace dataset. We extract a sample comprising 7055 face images from UTKFace such that each image belonged to a class of either {\em male} (3834 images) or {\em female} (3221 images). While extracting the sample, we intentionally picked the subjects only if they were Caucasian. Using this as the training data, we trained a CNN model to predict the gender of the subjects. We repeated this procedure on 10 different samples and observed that on average, there is a 1\% disparity in the overall accuracy of the model versus the accuracy for the Black subjects. Similar to the previous experiment, gradually increasing the number of Black subjects in the training data reduces the aforementioned disparity close to zero as seen in Figure~\ref{fig:gender}.
}

\begin{figure}[!htb] 
\centering
    \begin{subfigure}[t]{0.495\linewidth}
        \centering
        \includegraphics[width=\textwidth]{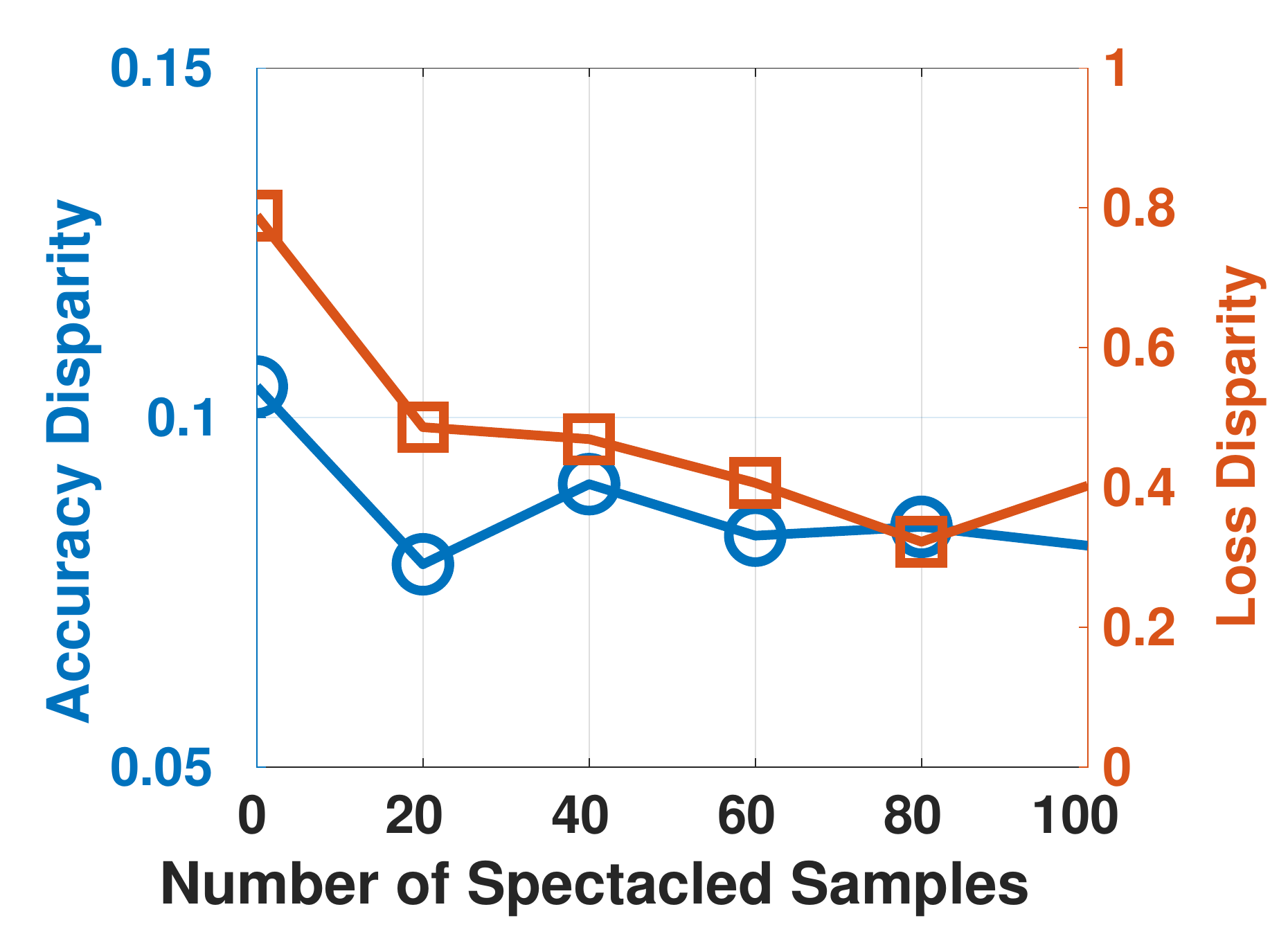}
        \vspace{-6mm}\caption{\color{\blue}drowsiness detection}
        \label{fig:drowsiness}
    \end{subfigure}
    \hfill
    \begin{subfigure}[t]{0.495\linewidth}
        \centering
        \includegraphics[width=\textwidth]{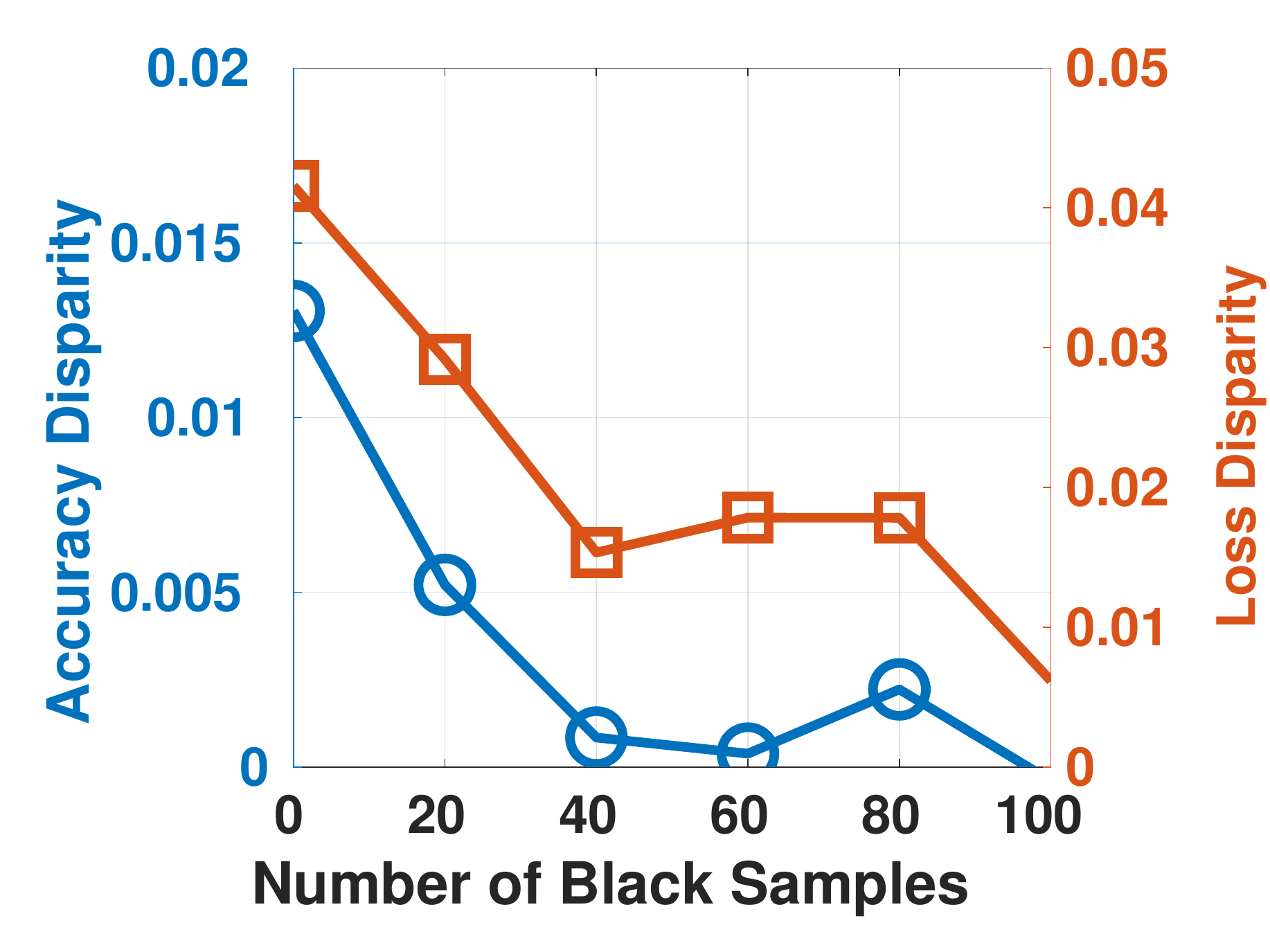}
        \vspace{-6mm}\caption{\color{\blue}gender detection}
        \label{fig:gender}
    \end{subfigure}
\vspace{-3mm}
\caption{\color{\blue}Effect of lack of coverage in the downstream tasks}
\vspace{-4mm}
\end{figure}

\subsection{Performance Evaluation}
In the following sections, we present the results of our experiments using the stated settings. First, we evaluate the performance of the \bc algorithm with varying parameters $\tau, n, N$ in \S \ref{ex:bin}. Next, we evaluate the optimizations for multiple non-intersectional and intersectional groups in \S \ref{exp:opt}.

\begin{figure*}[!tb]
\begin{minipage}[t]{\linewidth}
    \begin{subfigure}[t]{0.24\textwidth}
        \centering
        \includegraphics[width=\textwidth]{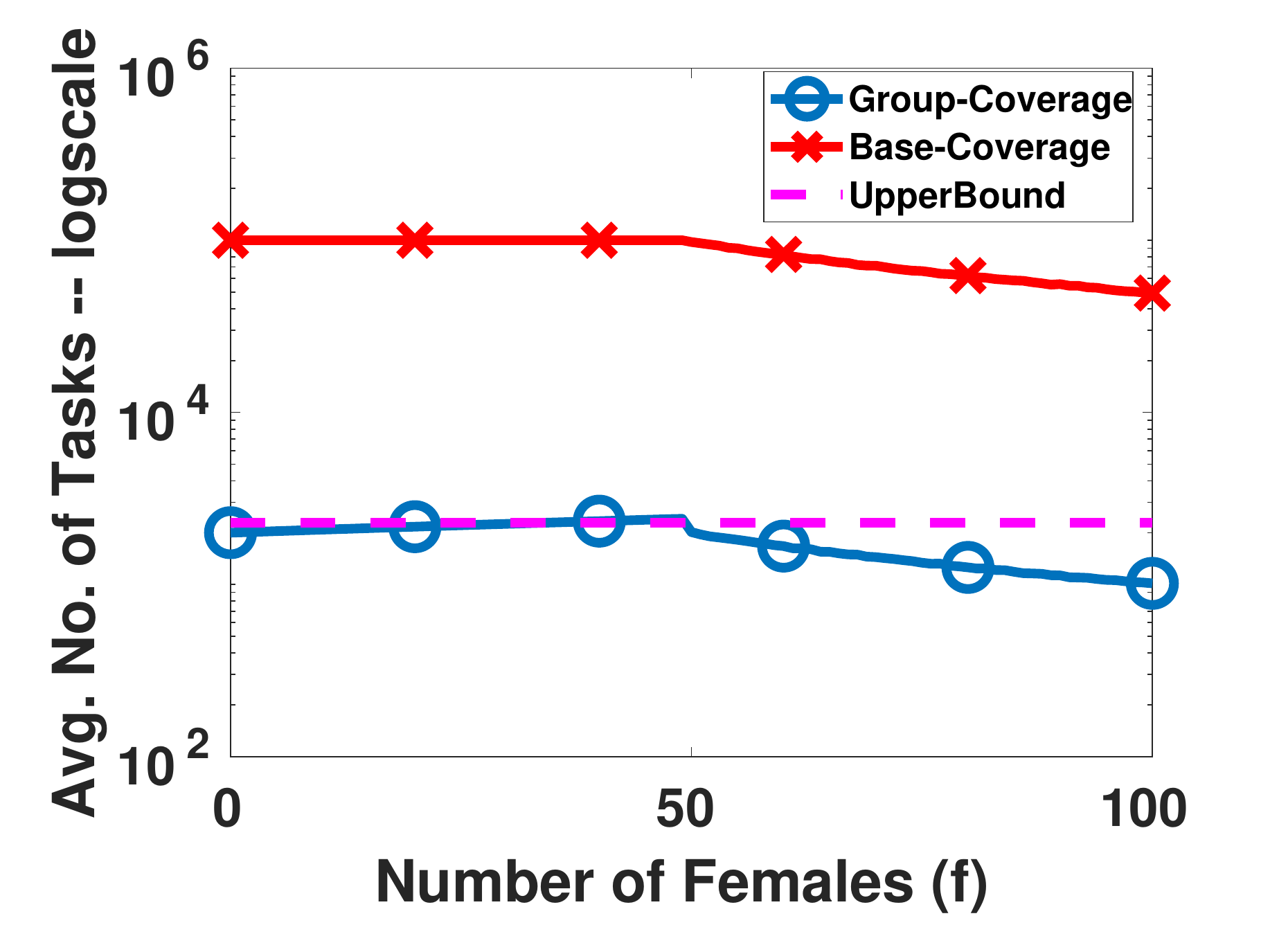}
        \vspace{-7mm}\caption{\color{\blue}varying \#females}
        \label{fig:coverageX}
    \end{subfigure}
    \hfill
    \begin{subfigure}[t]{0.24\textwidth}
        \centering
        \includegraphics[width=\textwidth]{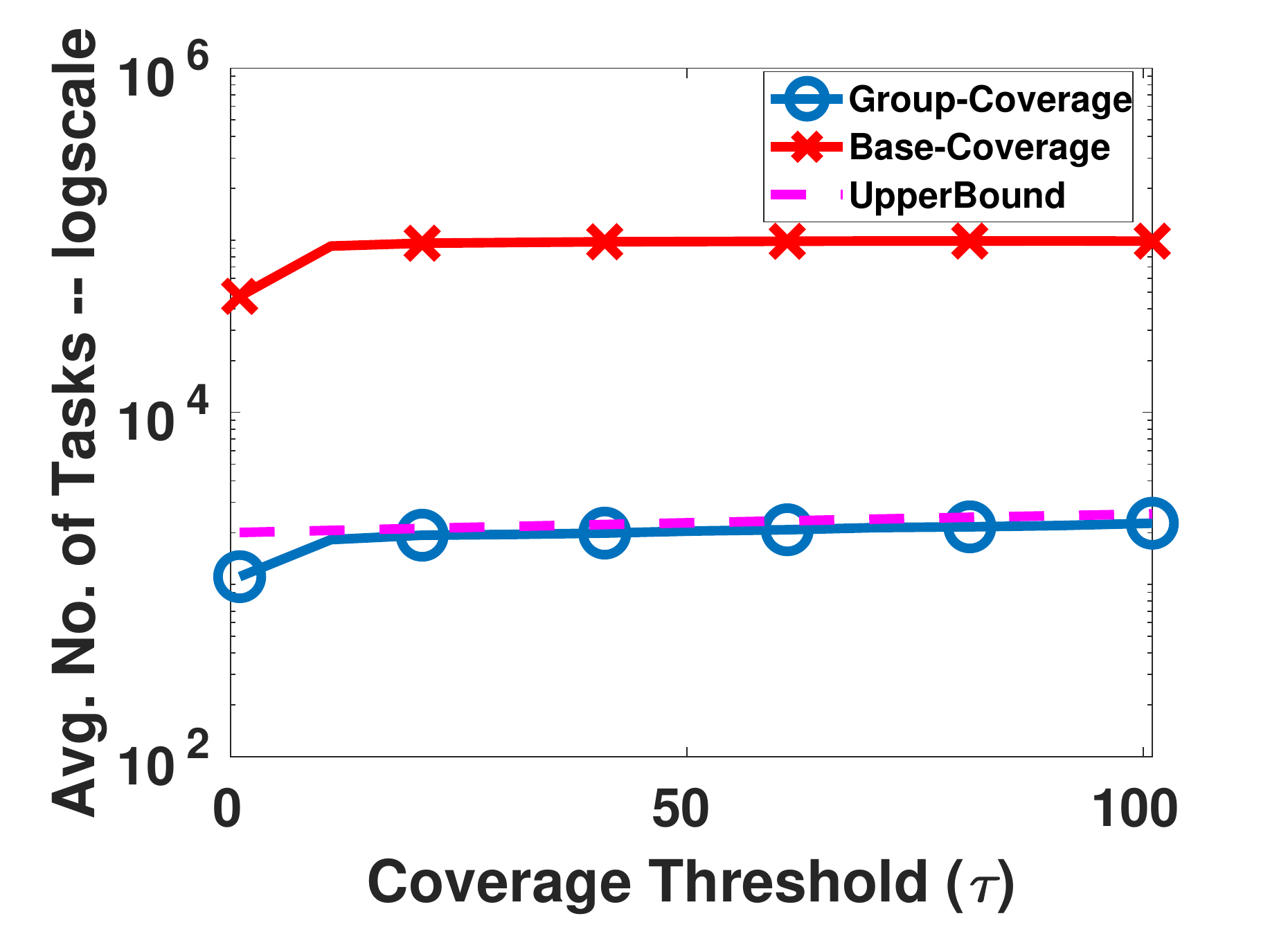}
        \vspace{-7mm}\caption{Varying coverage threshold}
        \label{fig:cov}
    \end{subfigure}
    \hfill
    \begin{subfigure}[t]{0.24\textwidth}
        \centering
        \includegraphics[width=\textwidth]{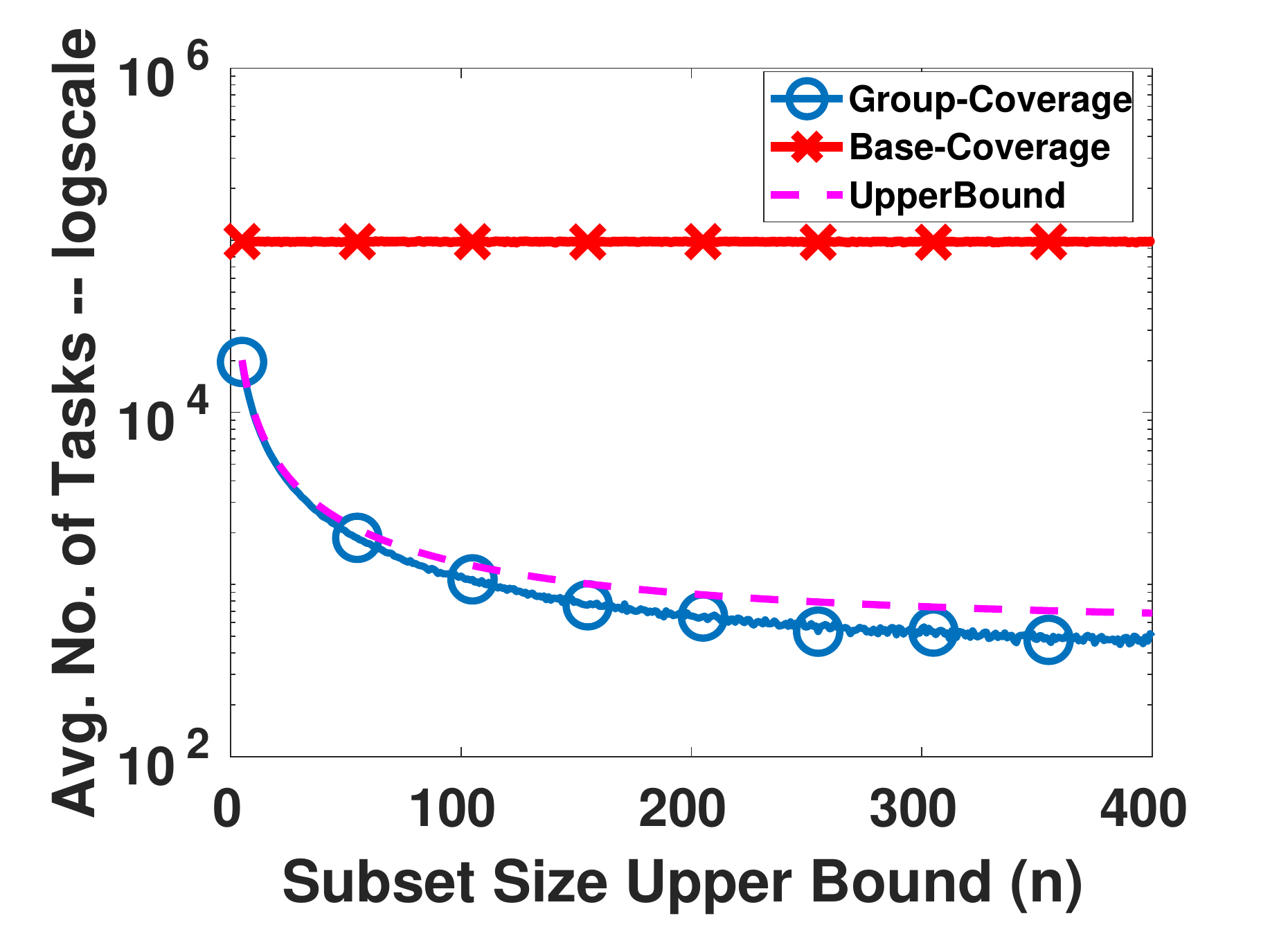}
        \vspace{-7mm}\caption{Varying subset size}
        \label{fig:subset}
    \end{subfigure}
    \hfill
    \begin{subfigure}[t]{0.24\textwidth}
        \centering
        \includegraphics[width=\textwidth]{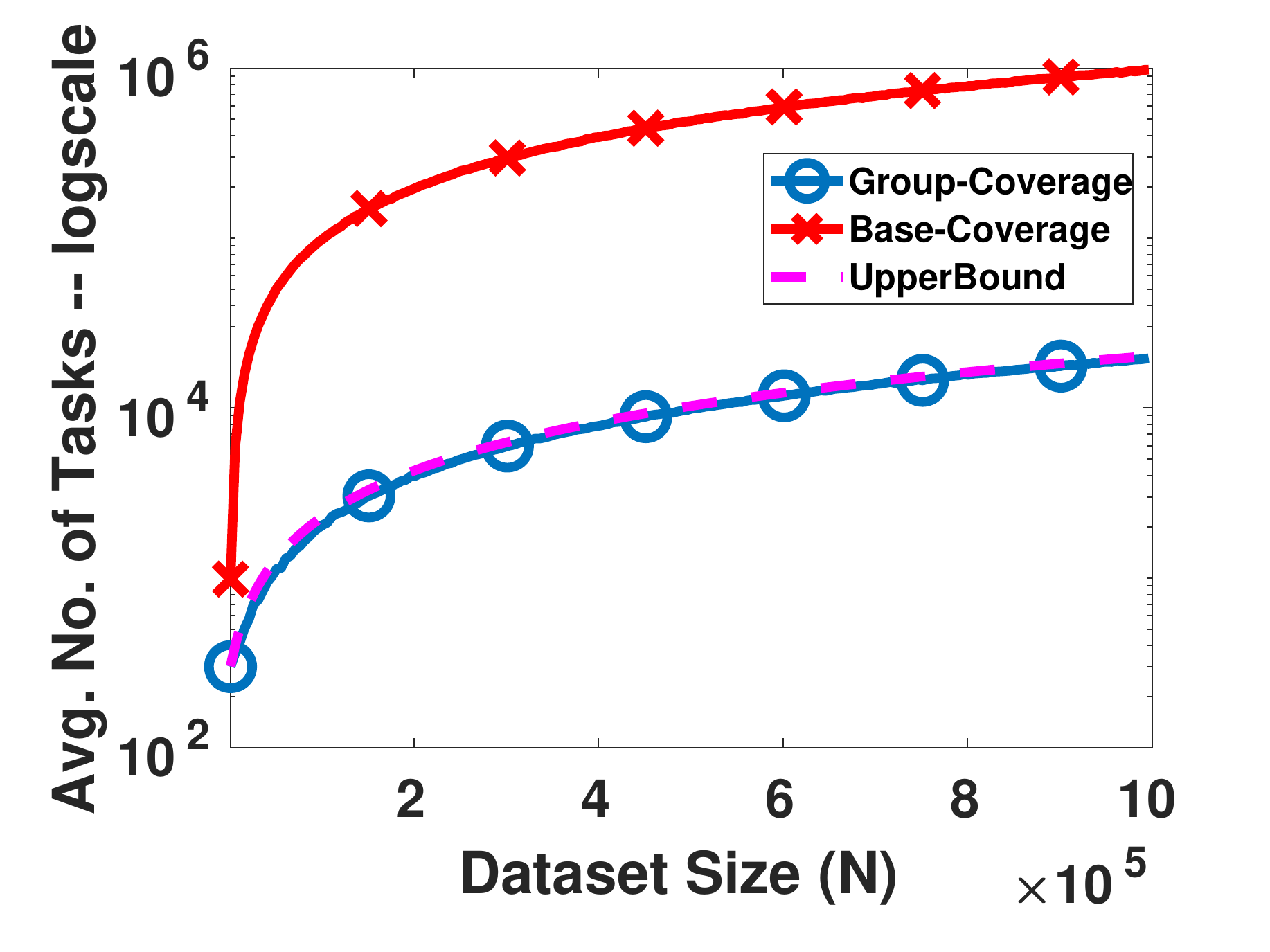}
        \vspace{-7mm}\caption{Varying dataset size}
        \label{fig:dataset}
    \end{subfigure}

    \begin{subfigure}[t]{0.24\textwidth}
        \centering
        \includegraphics[width=\textwidth]{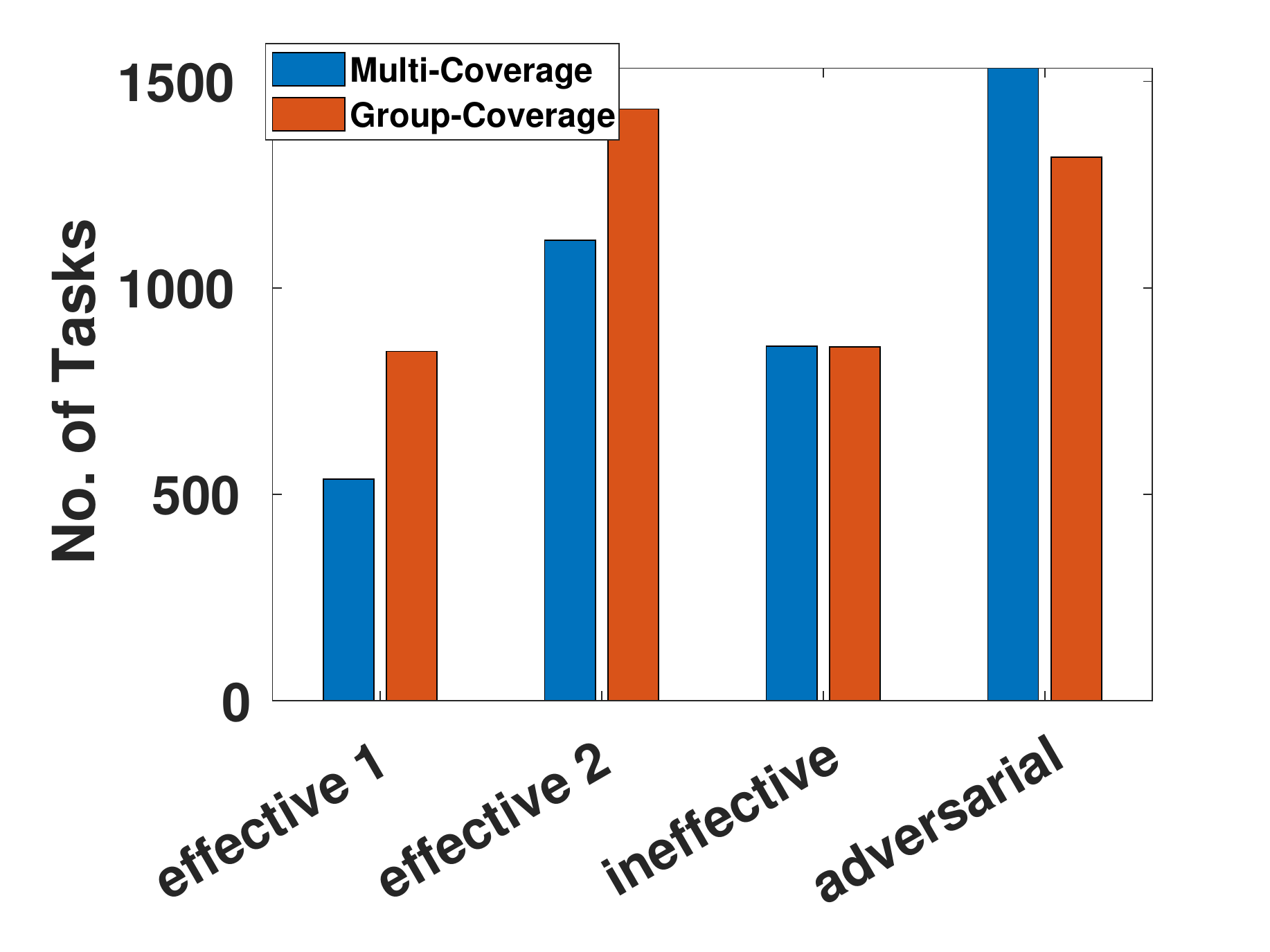}
        \vspace{-7mm}\caption{Multiple non-intersectional groups optimization ($\sigma=4$) vs. \bc}
        \label{fig:multi_bar}
    \end{subfigure}
    \hfill
    \begin{subfigure}[t]{0.24\textwidth}
        \centering
        \includegraphics[width=\textwidth]{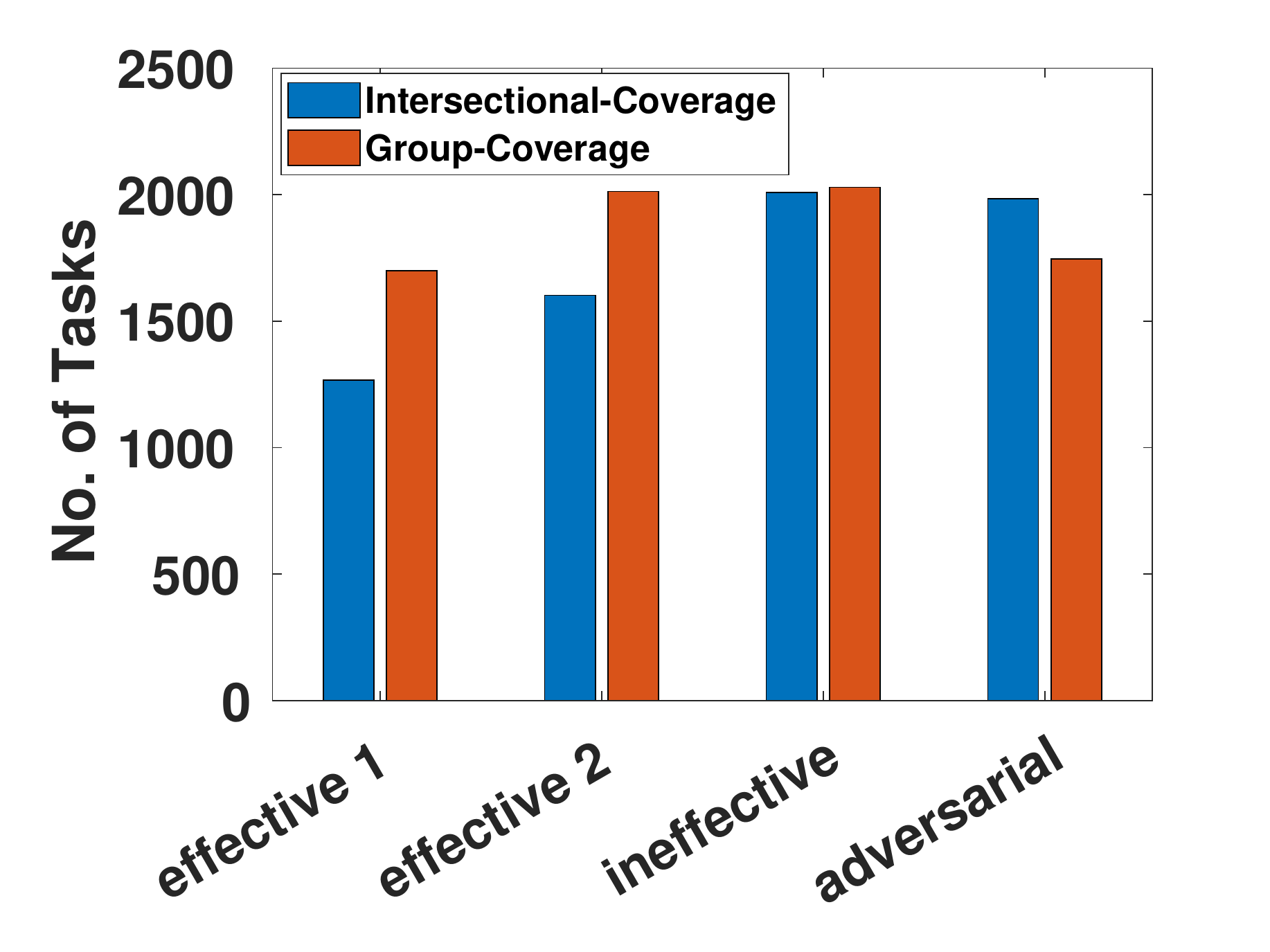}
        \vspace{-7mm}\caption{Intersectional groups optimization ($\sigma_1=2,\sigma_2=2,\sigma_3=2$) vs. \bc }
        \label{fig:inter_bar}
    \end{subfigure}
    \hfill
    \begin{subfigure}[t]{0.24\textwidth}
        \centering
        \includegraphics[width=\textwidth]{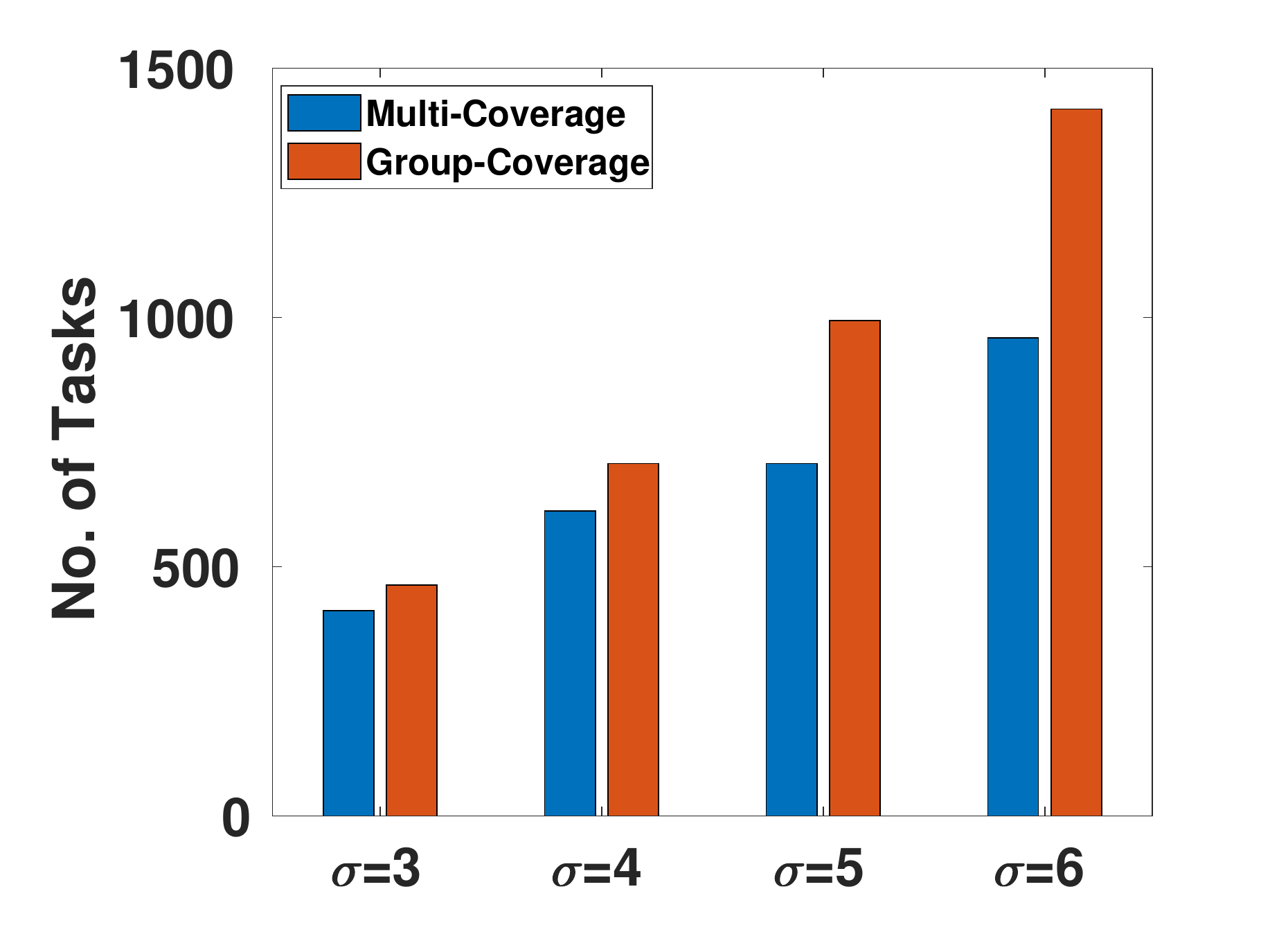}
        \vspace{-7mm}\caption{Multiple groups in one attribute with $\sigma=3,4,5,6$}
        \label{fig:nbOpt_2}
    \end{subfigure}
    \hfill
    \begin{subfigure}[t]{0.24\textwidth}
        \centering
        \includegraphics[width=\textwidth]{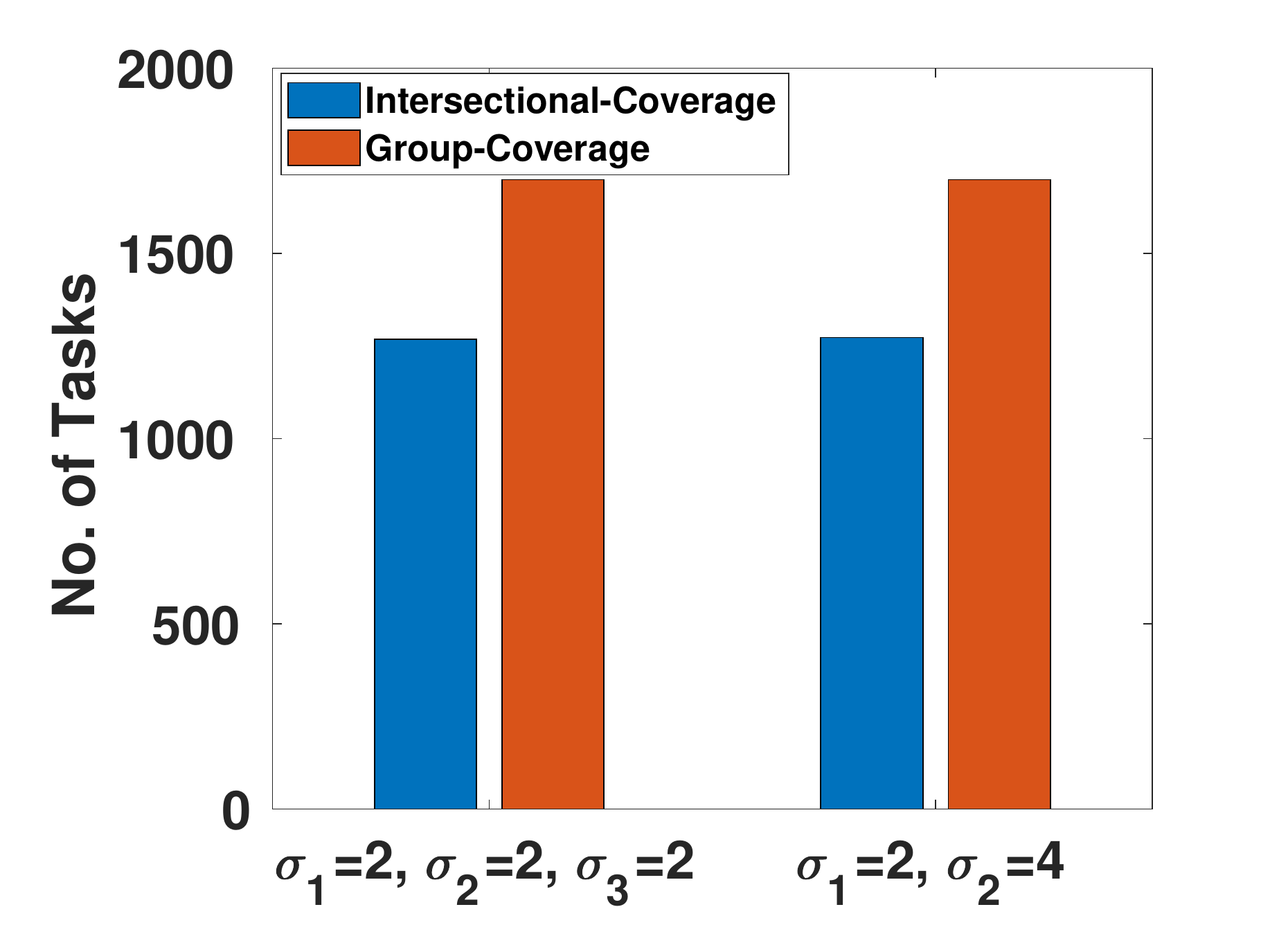}
        \vspace{-7mm}\caption{Multiple groups in two attributes with $\sigma_1=2,\sigma_2=4$ and $\sigma_1=2,\sigma_2=2,\sigma_3=2$}
        \label{fig:multiOpt_2}
    \end{subfigure}
    \vspace{-4mm}
    \caption{Performance evaluation for \bc, \nbc, and \mc algorithms}
\end{minipage}
\vspace{-5mm}
\end{figure*}

\subsubsection{\bc} \label{ex:bin}
To evaluate the performance of the \bc algorithm, we designed a simulation to reflect the procedure that the crowd would be presented with to carry out the tasks. The objective of the simulation is to first identify whether the dataset is covered with respect to a demographic group and determine the total number of tasks required to identify the coverage of a given group.
For this purpose, assuming we are interested in identifying the coverage of {\small \tt female}, we generate a dataset containing males and females and shuffle it randomly to prepare for the experiment. Each experiment with particular variables is run multiple times to better capture the effect of the dataset's underlying distribution on the results. In these sets of experiments, we study the impact of the scope of parameters on the end results.

\noindent{\bf Varying $\tau$}.
First, we analyze the relationship between the coverage threshold and the number of {\small \tt females} in the dataset and its impact on the number of necessary tasks to get the results. Figure \ref{fig:coverageX} illustrates the number of required tasks when there exist $[0,2\tau]$ items of the demographic group in the dataset. We have a dataset of size 100K, and we select the coverage threshold as $50$. It can be observed that the largest number of queries is needed when the number of {\small \tt females} ($f$) is close to $\tau$. Conversely, the farther $f$ gets from $\tau$, the fewer tasks are required to get to a conclusion. This observation is consistent with the discussion in \S \ref{single_analysis}; with too few or too large quantities of $f$ in the dataset, our algorithm's results appear to be further from the upper bound. 

Figure \ref{fig:cov} shows the results of running the algorithm with different coverage thresholds. The coverage threshold is varied from \textcolor{\blue}{1 to 100 (0.001\% to 0.1\% of the dataset size)} and there are exactly $\tau$ females at each run. Naturally, when the coverage threshold increases, the algorithm needs to cover more grounds to produce results. This also shows that the relationship between the coverage threshold and the cost is linear as discussed in \S \ref{single_analysis}. Note that the results in this figure demonstrate the case where $f = \tau$, which is the situation that requires the maximum number of tasks to get to a decision and is very close to the theoretical upper bound.

\noindent{\bf Varying $n$}.
This experiment is designed to study the impact of the subsets size upper bound on the algorithm's outcome. We set the coverage threshold to $50$ in a dataset of size $100,000$ while maintaining $50$ instances of females in the dataset. In Figure \ref{fig:subset}, we can see a substantial jump in the number of tasks when the subset size increases from around $10$ to $20$. Moreover, the result does not change significantly after that even with a notable increase in the subsets size. This confirms the logarithmic nature of the subsets size upper bound parameter in the algorithm. 

When determining the set size, one must consider the ability of the crowd to identify the subject of the task at hand. While selecting larger $n$ might lead to fewer required tasks, it is likely that we obtain less reliable answers from the crowd due to the large number of items presented all at once. Additionally, increasing the initial subset  size will not significantly impact on the end results.

\noindent{\bf Varying $N$}.
To assess the performance of our algorithm in a variety of datasets, we ran experiments for datasets of size 1K to 1M.

Figure \ref{fig:dataset} illustrates the results of the algorithm for varying dataset sizes. As expected, the number of required tasks to determine the results grows linearly with the size of the dataset, but never exceeds 6\%. In other words, our results show that in practice, we can determine the group coverage for a dataset with tasks no more than 6\% of the dataset size at the very worst case.

\vspace{-2.5mm}
\subsubsection{Optimizations for multiple groups} \label{exp:opt}
To evaluate the performance of our proposed method in identifying the coverage for multiple groups, we take a similar approach as the previous section. We create a synthetic dataset comprising of data points that can correspond to $\sigma=3,4,5,6$ distinct demographic groups for the non-intersectional case, and two datasets, one with 2 attributes with cardinalities $\sigma_1=2, \sigma_2=4$ and the other with 3 binary attributes ($\sigma_1=2, \sigma_2=2, \sigma_3=2$) for the intersectional case.
In a dataset of $10$K points, with a threshold of 50 and a subset size of 50, we vary the number of items for each group in the dataset to simulate different combinations of settings and run the algorithm for each variation. Additionally, we run the \bc algorithm for each group independently to compare our results. In our experiments, we found out that while our heuristics on multiple groups can perform very well in some cases, it can also appear to be ineffective or worse than the brute force in some other. The results of the experiments for these cases are shown in Figures \ref{fig:multi_bar} and \ref{fig:inter_bar}.

Each of the bars defined as effective 1, effective 2, ... represent a different setting with respect to the number of instances associated with each group which is further described in Table \ref{tab:exp_desc}.

The adversarial case, in which there are multiple uncovered groups with summation of items greater than $\tau$ in the dataset, it is likely that our heuristic fails in aggregating these groups into a super-group since the probability of having instances of these groups in the sample is significantly low. Thus, the super-group is covered and the algorithm needs to run for each of the subgroups individually. This imposes a penalty on the total number of tasks and makes it an adversarial case for our heuristic. To conclude, we can expect that our method works very well or with little difference compared to brute force in some cases while failing in others.
\begin{table}[t]
    \centering
    \footnotesize
    \begin{tblr}{||m{0.16\linewidth}|m{0.60\linewidth}||}
    \hline
        {\bf Setting} & {\bf Description} \\ \hline
        effective 1 &  3 uncovered minorities; their aggregated super-group is uncovered
        \\\hline
        effective 2 &  3 covered minorities\\ \hline
        ineffective & 2 uncovered and one covered minority\\ \hline
        adversarial & 3 uncovered minorities; their aggregated super-group is covered\\ \hline
    \end{tblr}
    \caption{Experiment settings for multiple groups}
    \label{tab:exp_desc}
    \vspace{-10mm}
\end{table}

Figure \ref{fig:nbOpt_2} shows the results for the \nbc algorithm for attributes with various cardinalities. Considering cases where our heuristic is effective, as the cardinality of the attribute increases, the total required tasks in \nbc grows more slowly than the brute force, resulting in a larger gap between the two methods as the cardinality increases.

Figure \ref{fig:multiOpt_2} represents the results of the \mc algorithm for two cases, one with 2 attributes with cardinalities $\sigma_1=2, \sigma_2=4$ and the other with 3 binary attributes ($\sigma_1=2, \sigma_2=2, \sigma_3=2$). As expected, with the same settings, the results for each of these cases are similar, with the number of fully-specified subgroups at the maximum level for both cases being equal. In other words, in the case of intersectional groups with multiple attributes, the only important feature is the cardinality of the attributes rather than the number of attributes.

\vspace{-2mm}
\section{Related Works}\label{sec:related}

\paragraph{Crowd-sourcing for Bias Detection}
\cite{hu2020crowdsourcing} proposes a crowd-sourcing workflow to facilitate sampling bias discovery in visual datasets with the help of human-in-the-loop. This workflow takes a visual data set as an input and outputs a list of potential biases of the data set. There are three steps in this workflow. The first step is \textit{Question Generation} and the crowd inspects random samples of images from the input dataset and describes their similarity using a question-answer pair. The next step is \textit{Answer Collection} in which the crowd review separates random samples of images from the input dataset and provides answers to questions solicited from the previous step. Finally, in the third step called \textit{Bias Judgement} the crowd judge whether statements of the visual dataset that are automatically generated using the questions and answers collected accurately reflect the real world. 

\vspace{-2mm}
\paragraph{Set queries}
\color{\blue}
Set-based HITs, similar to our set queries, have been used in various crowdsourcing studies, including crowd powered data mining~\cite{li2017crowdsourced,li2016crowdsourced}.
\cite{10.1145/2213836.2213878} first introduced the idea of filtering a set of data based on a particular property using humans. 
Another example is \cite{6228183}, where 
for the purpose of top-k and group-by queries, the crowd is asked to answer {\em type} set question which has ``yes'' or ``no'' answer based on whether the data points in a set have the same type, which is similar to our notion of set queries on a target demographic group.
Set queries have also been used in the crowd-sourced ``count'' operation: For instance, Marcus et al.~\cite{marcus2012counting} show a small batch of objects (images) to the crowd, asking them to estimate the number of items satisfying a specific constraint (e.g., photos with a car in them).
Set queries are also popular in crowd-sourced clustering. For example, in \cite{gomes2011crowdclustering} each worker views a small set of images as a HIT, where they are asked to provide a partial clustering of the set.
Set queries have also been used for tasks such as crowd-sourced median finding~\cite{heikinheimo2013crowd}, crowd-sourced planning~\cite{kaplan2013answering}, etc.


\vspace{-2mm}
\paragraph{Group testing}
Our proposed approach in the \bc algorithm falls under the general category of group testing approaches, where a task of identifying certain objects is broken up into tests on groups of items~\cite{du2000combinatorial}.
First proposed by Dorfman~\cite{dorfman1943detection}, group testing has been widely used across different domains~\cite{du2000combinatorial}, with early applications such as detecting broken electrical circuits~\cite{chen1989detecting} with more recent applications in graphs~\cite{cheraghchi2012graph}, web databases~\cite{asudeh2016query}, and even in Covid-19 detection~\cite{gollier2020group}.
Related work includes \cite{eppstein2007improved}, which explores ways to perform efficient combinatorial group testing to identify up to $d$ defective items from a set of $n$ items using a reduced number of tests for practical set sizes.
More generally, the class of divide and conquer (d\&c) algorithms are popular in the crowdsourcing. For example, \cite{zhang2011crowdsourcing} proposes a crowd-sourced d\&c approach for sorting. Similarly, \cite{negri2011divide} proposes a crowdsourcing d\&c approach for creating cross-lingual textual entailment corpora. Related work also includes crowdsourcing d\&c approaches for mobile platforms~\cite{amato2013divide}, paired comparisons~\cite{wang2023crowdc}, etc.

\color{black}
\vspace{-2mm}
\paragraph{Coverage} The notion of data {\it coverage} has been studied across different settings~\cite{shahbazi2023representation,jin2020mithracoverage,asudeh2019assessing,lin2020identifying,asudeh2021coverage,tae2021slice,accinelli2021impact,moskovitch2020countata,accinelli2020coverage}. 
With many angles to tackle, data coverage has been studied for datasets with discrete~\cite{asudeh2019assessing} and continuous~\cite{asudeh2021coverage} attributes populated in single or multiple \cite{lin2020identifying} relations. 
\techrep{
Additionally, \cite{accinelli2020coverage, accinelli2021impact,shetiya2022fairness}
consider resolving representation bias in preprocessing pipelines by rewriting queries into the closest operation so that certain subgroups are sufficiently represented in the downstream tasks. Similarly, \cite{nargesian2021tailoring} uses data integration as a mean to resolve representation bias.
}
\submit{Additionally, \cite{accinelli2020coverage, accinelli2021impact,shetiya2022fairness} (resp. \cite{nargesian2021tailoring}) use query rewriting (resp. data integration) to resolve representation bias.}
Existing works in data coverage have so far only focused on tabular data.

\submit{\vspace{-2mm}}
\section{Conclusion}
In this paper, we studied the problem of coverage identification in image data. This problem is motivated by the historical representation bias in various forms of data, and specifically the inefficiency of the existing supervised or unsupervised learning methods in performing equally well for minority groups on image data. We proposed an efficient algorithm to identify the coverage of a demographic group across the dataset and showed that the number of required tasks is optimal and close to the theoretical lower bound, and introduced practical heuristics to expand our solution for multiple non-intersectional or intersectional groups. We also presented an optimization method for detecting group coverage in datasets labeled by the existing, pre-trained predictors.

In this work, we focused on image data as a specific form of multimedia data. We hope to find equally efficient methods to identify data coverage in other forms of multimedia data such as video in our future work. In addition, our goal in this paper was mainly to minimize the cost of crowdsourcing by minimizing the total number of required tasks. We consider extending our techniques to support various pricing models as part of our future work. 

\section{Research Ethics Review Statement}
{\color{\blue}
The research conducted in this study involving participants from Amazon Mechanical Turk (MTurk) ensured adherence to ethical principles and guidelines. 
Participants were provided with clear and comprehensive information such as the purpose, procedures, the type of the task, the compensation amount, and the expected time to complete the task.
The participants were informed about the qualification screening.
Informed consent was obtained from all participants before their engagement in the study.
In addition to the platforms anonymization of MTurk workers, we further ensured the privacy of participants by not collecting personal information.
}

\submit{\newpage}
\bibliographystyle{ACM-Reference-Format}
\bibliography{bib}

\techrep{
\newpage
\section*{Appendix} \label{apx}
\appendix

\section{Pseudo-codes}

\begin{algorithm}[H]
    \caption{{\sc Partition} \& {\sc Label}} \label{alg:part}
    \begin{algorithmic}[1]
        \Function{\sc Partition}{$\dee,\Gee,n$}
        \State Let $Q$ = an empty queue
        \For{$i\gets0$ to $N$ with step size $n$}:
            \State $t \gets \{t_i,\cdots,t_j\}$
            \State $Q.add(t)$
        \EndFor
        \State Let $S$ = an empty set
        \While{$Q$ is not empty}
            \State $T\gets Q.del\_top()$
            \State $(i,j)\gets (T.b\_index,T.e\_index)$
            \State ans $\gets ${\sc AskQuestion}$(\{t_i,\cdots,t_j\},\gee')$
            \If{ans=no}
                \State $S.${\sc add}$(\{t_i,\cdots,t_j\})$
            \Else
                \If{$j>i$ {\tt\small /*if setsize>1*/}}
                    \State $T_1 \gets \{t_i,t_{\lfloor\frac{i+j}{2}\rfloor})$
                    \State $T_2 \gets (t_{\lfloor\frac{i+j}{2}\rfloor+1}, t_j)$
                    \State $Q.add(T_1)$; $Q.add(T_2)$
                \EndIf
            \EndIf
        \EndWhile
        \State {\bf return} $S$
        \EndFunction
        
        \Function{\sc Label}{$\dee,\Gee,\tau$}
        \State $cnt \gets 0$
            \For{$t \in \Gee$}
                \State $l \gets ${\sc PointQuery}$(t)$
                \State {\bf if} {$l \ne \gee$} {\bf then} $\Gee.${\sc Remove}$(t)$
                \State {\bf else} $cnt \gets cnt + 1$ 
                \If{$cnt \ge \tau$}   \State {\bf break} \EndIf
            \EndFor
            \State {\bf return} $\Gee$
        \EndFunction
    \end{algorithmic}
\end{algorithm}

\begin{algorithm}[H]
    \caption{{\sc Label Samples} \& \agg}\label{alg:agg}
    \begin{algorithmic}[1]   \small
    \Function{\sc LabelSamples}{$\dee, \tau, c=2$}
        \For{$c\tau$ random samples $t$ from $\dee$}
            \State $l \gets$ {\sc PointQuery}$(t)$
            \State $\el.add(\langle t,l\rangle)$;
            $\dee.remove(t)$
        \EndFor
        \State {\bf return} $\dee,\el$
    \EndFunction
    
    \Function{\agg}{$\el,N,\tau,\g, multi=false$}
    \State sort $\g$ based on $\el.${\sc count}$(\gee)$, $\gee\in\g$, ascending
    \State $sum \gets 0$; $\Gee\gets\{\}$
    \For{$\gee \in \g$}  
        \State $E_\gee \gets \frac{\el.\mbox{\sc count}(\gee)}{|\el|} \times N$
        \State {\bf if:} {$multi=$ true} {\bf then} $\Gee \gets \{\gee' \; | \; \gee'$.parent$=\gee$.parent$\}$
            \State {\bf if:}{$sum + E_\gee < \tau$} {\bf then} $\Gee.${\sc add}$(\gee)$;
                $sum \gets sum + E_\gee$
            \State {\bf else: } $\g_{agg}.${add}$(\Gee)$ ; $\Gee\gets \{\gee\}$; $sum \gets E_\gee$
                
    \EndFor
    \State {\bf return} $\g_{agg}$.{add}$(\Gee)$
    \EndFunction

    \end{algorithmic}
\end{algorithm}

\begin{algorithm}[H]
    \caption{{\sc Base-Coverage}($\dee,\tau,\gee$)} \label{alg:pointcvg}
    \begin{algorithmic}[1] \small
    \Require{Dataset $\dee$, coverage threshold $\tau$, and target group $\gee$}
    \Ensure{Coverage of group $\gee$}
        \State $cnt \gets 0$
        \For{$\forall t \in \dee$}
            \State ans $\gets ${\sc AskQuestion}$(t,\gee)$
            \If{$ans=${\bf true}}   $cnt \gets cnt + 1$ \EndIf
            \State {\bf if} $cnt=\tau$ {\bf then return true} {\tt\small // covered}
        \EndFor
        \State {\bf return false} {\tt\small //uncovered}
    \end{algorithmic}
\end{algorithm}
\balance
}

\end{document}